\theoremstyle{plain}
\newtheorem{theorem}{Theorem}[section]
\newtheorem{lemma}[theorem]{Lemma}
\newtheorem{corollary}[theorem]{Corollary}
\newtheorem{observation}[theorem]{Observation}
\newtheorem{claim}[theorem]{Claim}
\theoremstyle{definition}
\newtheorem{definition}[theorem]{Definition}
\theoremstyle{remark}
\newtheorem{remark}[theorem]{Remark}
\newenvironment{customthm}[1]
{\innercustomthm}
  {\endinnercustomthm}
\DeclareMathOperator*\Exp{\bf E}
\DeclareMathOperator*\Prob{\bf Pr}
\renewcommand{\P}{\mathsf{P}}
\newcommand{\ignore}[1]{}
\newcommand{\prn}[1]{\left(#1\right)}
\newcommand{\cprn}[1]{\!\left(#1\right)}
\newcommand{\sqbra}[1]{\left[#1\right]}
\newcommand{\csqbra}[1]{\!\left[#1\right]}
\newcommand{\abs}[1]{\left|#1\right|}
\newcommand{\brkts}[1]{\left\{#1\right\}}
\newcommand{\bool}{\brkts{0,1}}
\newcommand{\cpoly}[1]{\textup{poly}\cprn{#1}}
\newcommand{\pr}[1]{\Prob\sqbra{#1}}
\newcommand{\cpr}[1]{\Prob\csqbra{#1}}
\newcommand{\cprq}[2]{\Prob_{#2}\csqbra{#1}}
\newcommand{\cexpect}[1]{\Exp\csqbra{#1}}
\newcommand{\cexpectq}[2]{\Exp_{#2}\csqbra{#1}}
\newcommand{\R}{\mathbb{R}}
\newcommand{\NP}{\mathsf{NP}}
\newcommand{\dtv}{d_{\mathrm{TV}}}
\newcommand{\U}{\mathbb{U}}
\newcommand{\SZK}{\mathsf{SZK}}
\newcommand{\NISZK}{\mathsf{NISZK}}
\title{Total Variation Distance Meets Probabilistic Inference%
\thanks{The author list has been sorted alphabetically by last name; this should not be used to determine the extent of authors’ contributions.
This work has been accepted for presentation at the International Conference on Machine Learning (ICML) 2024.}
}
\author{%
{\bf Arnab Bhattacharyya}\\
National University of Singapore
\and
{\bf Sutanu Gayen}\\
Indian Institute of Technology Kanpur
\and
\bf Kuldeep S. Meel\\
University of Toronto
\and
\bf Dimitrios Myrisiotis\\
CNRS@CREATE LTD.
\and
\bf A. Pavan\\
Iowa State University
\and
\bf N. V. Vinodchandran\\
University of Nebraska-Lincoln
}
\begin{document}

\maketitle

\begin{abstract}
In this paper, we establish a novel connection between total variation (TV) distance estimation and probabilistic inference.
In particular, we present an efficient, structure-preserving reduction from relative approximation of TV distance to probabilistic inference over directed graphical models.
This reduction leads to a fully polynomial randomized approximation scheme (FPRAS) for estimating TV distances between same-structure distributions over any class of Bayes nets for which there is an efficient probabilistic inference algorithm.
In particular, it leads to an FPRAS for estimating TV distances between distributions that are defined over a common Bayes net of small treewidth.
Prior to this work, such approximation schemes only existed for estimating TV distances between product distributions.
Our approach employs a new notion of \emph{partial} couplings of high-dimensional distributions, which might be of independent interest.
\end{abstract}

\section{Introduction}

\label{sec:intro}

Substantial research has been devoted to developing models that represent high-dimensional probability distributions succinctly.
One prevalent approach is through graphical models.
In a graphical model, a graph describes the conditional dependencies among variables and the probability distribution is factorized according to the adjacency relationships in the graph~\cite{koller2009probabilistic}.
When the underlying graph is a directed graph, the model is known as a Bayesian network or Bayes net.

Two fundamental computational tasks on distributions are {\em distance computation} and {\em probabilistic inference}.
In this work, we establish a novel connection between these two seemingly different computational tasks.
Using this connection, we design new relative error approximation algorithms for estimating the statistical distance between Bayes net distributions with small treewidth.

\paragraph{Total Variation Distance Computation.}

The distance computation problem is the following:
Given descriptions of two probability distributions $P$ and $Q$, compute $\rho(P,Q)$ for a distance measure $\rho$.
A distance measure of central importance is the {\em total variation (TV) distance} (also known as the {\em statistical distance}).
Let $P$ and $Q$ be distributions over a finite domain $\mathcal{D}$.
The total variation distance between $P$ and $Q$, denoted by $\dtv(P,Q)$, is defined as
\[
\dtv(P,Q)
= \max_{S \subseteq \mathcal{D}}\cprn{P(S)-Q(S)}.
\]
The total variation distance satisfies many basic properties which makes it a versatile and fundamental measure for quantifying the dissimilarity between probability distributions. 
First, it has an explicit probabilistic interpretation:
The TV distance between two distributions is the maximum gap between the probabilities assigned to a single event by the two distributions.
Second, it satisfies many mathematically desirable properties:
It is bounded and lies in $[0,1]$, it is a metric, and it is invariant with respect to bijections.
Total variation distance also measures the minimum probability that $X\neq Y$ among all couplings $(X,Y)$ between $P$ and $Q$.
Because of these reasons, the total variation distance is a central distance measure employed in a wide range of areas including probability and statistics~\cite{mitzenmacher2005probability}, machine learning~\cite{shwartz2014understanding}, information theory~\cite{cover2006elements}, cryptography~\cite{stinson1995cryptography}, data privacy~\cite{dwork2006differential}, and pseudorandomness~\cite{vadhan2012pseudorandomness}.

\paragraph{Probabilistic Inference.}

Probabilistic inference in graphical models is a fundamental computational task with a wide range of applications that spans disciplines including statistics, machine learning, and artificial intelligence (e.g., see~\cite{wainwright2008graphical}).
Various algorithms have been proposed for this problem, encompassing both exact approaches like message passing~\cite{pearl1988probabilistic}, variable elimination~\cite{dechter1999bucket}, and junction-tree propagation~\cite{lauritzen1988local}, as well as approximate techniques such as loopy belief propagation, variational inference-based methods~\cite{wainwright2008graphical}, and particle-based algorithms (refer to Chapter 13 of~\cite{koller2009probabilistic} and the references therein).
Computational hardness results have also been established in several works~\cite{cooper1990computational, littman2001stochastic, roth1996hardness, kwisthout2010necessity}.
In this work, we rely on the following formulation of probabilistic inference:
Given (a representation of) random variables $X_1,\dots,X_n$ and (a representation of) sets $S_1,\dots,S_n$ such that for all $i$ the set $S_i$ is a subset of the range of $X_i$, compute
\[
\cpr{X_1\in S_1,\dots,X_n\in S_n}.
\]

\subsection{Our Contributions}

The problems of total variation distance computation and probabilistic inference have been studied for nearly four decades on their own, but there is no known relationship between these two fundamental yet seemingly different computational tasks.
The primary goal of our paper is to initiate an investigation to determine such a relationship.
Surprisingly, we demonstrate that there is a {\em structure-preserving} reduction from the TV distance estimation problem to the probabilistic inference problem over Bayes nets:
In particular, we exhibit an efficient probabilistic reduction that, given two Bayes nets $P$ and $Q$ defined over a directed acyclic graph (DAG) $G$, makes probabilistic inference queries to a Bayes net $\mathcal{L}$ defined over the {\em same} DAG $G$ and returns a relative approximation of $\dtv(P, Q)$.

\begin{theorem}[Informal]
\label{thm:dtv-to-inference}
There is a polynomial-time randomized algorithm that takes a DAG $G$, two Bayes nets $P$ and $Q$ over $G$, and parameters $\varepsilon,\delta$ as inputs and behaves as follows.
The algorithm makes probabilistic inference oracle queries to a Bayes net over the same DAG $G$ and outputs a $(1+\varepsilon)$-relative approximation of $\dtv(P,Q)$ with probability at least $1-\delta$.
\end{theorem}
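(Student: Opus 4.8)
The plan is to turn the estimation of $\dtv(P,Q)$ into a self-reducible Monte Carlo computation whose every probabilistic ingredient is a product-of-events query to a Bayes net living on the same DAG $G$. I start from the identity
\[
\dtv(P,Q)=\sum_{x}\prn{P(x)-Q(x)}_+=\cexpect{\prn{1-\tfrac{Q(X)}{P(X)}}_+},
\]
where $X\sim P$ and $n$ denotes the number of nodes of $G$. Since $P$ is a Bayes net over $G$, a sample $X\sim P$ is produced by ancestral sampling along a topological order of $G$, and both $P(X)$ and $Q(X)$ are products of CPT entries and hence computable in polynomial time. Thus $Z:=\prn{1-Q(X)/P(X)}_+$ is an unbiased estimator of $\dtv(P,Q)$ taking values in $[0,1]$.

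First I would dispatch the ``large-distance'' regime. As $Z\in[0,1]$ we have $\expect{Z^2}\le\expect{Z}=\dtv(P,Q)$, so the relative variance of $Z$ is at most $1/\dtv(P,Q)$. Averaging $O\prn{\eps^{-2}\,\dtv(P,Q)^{-1}\log(1/\delta)}$ independent copies and applying a median-of-means (or empirical Bernstein) bound yields a $(1\pm\eps)$-relative estimate with probability $1-\delta$. Consequently, whenever $\dtv(P,Q)\ge 1/\poly{n}$ this plain estimator is already an FPRAS, and no oracle beyond evaluating $P$ and $Q$ is required.

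The substance of the theorem is the ``small-distance'' regime, where $\dtv(P,Q)$ may be exponentially small and the sample size above explodes. Here I would exploit self-reducibility along the topological order $X_1,\dots,X_n$, decomposing the contribution to $\dtv$ by the first coordinate at which $P$ and $Q$ ``separate.'' Conditioned on a shared prefix, I compare the conditional laws $P(X_i\mid X_{\pa{i}})$ and $Q(X_i\mid X_{\pa{i}})$: their overlap $\sum_b\min\prn{\cdot,\cdot}$ is the probability mass that is carried forward and recursed upon, while the complementary excess is the contribution to the distance at level $i$. This is the partial coupling. Its telescoping factors are precisely the rectangle probabilities of an auxiliary Bayes net $\mathcal{L}$ on the same DAG $G$, whose CPT at node $i$ is assembled from the overlaps (and excesses) of the CPTs of $P$ and $Q$; each such factor is therefore a single product-of-events inference query to $\mathcal{L}$. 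Because every factor is a conditional probability bounded away from the degenerate values, a product that is itself exponentially small is recovered to multiplicative accuracy by estimating each factor to a small relative error and multiplying, with per-factor sample sizes that stay polynomial. Standard median amplification and a union bound over the $n$ levels give the overall $(1\pm\eps)$, $1-\delta$ guarantee.

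The hard part will be the construction and analysis underlying the previous paragraph. Total variation distance is not multiplicative under conditioning: the overlap of two conditional laws does not factorize across coordinates, and the mismatch between the coordinate-marginals of $P$ and $Q$ obstructs any naive reduction to conditional TV distances. The crux is to define the partial coupling so that simultaneously (i) the auxiliary net $\mathcal{L}$ is supported on the same DAG $G$, preserving structure; (ii) every telescoping factor equals one product-of-events inference query to $\mathcal{L}$; and (iii) every factor is well-conditioned, so that relative errors compose without blow-up across the $n$ levels. Establishing that such an $\mathcal{L}$ exists and that the induced product reproduces $\dtv(P,Q)$ (exactly, or up to a bias I can drive below $\eps$) is the heart of the argument; once it is in place, the amplification and accounting are routine.
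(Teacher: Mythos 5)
Your first step (the unbiased estimator $\max\cprn{0,1-Q(X)/P(X)}$ for $X\sim P$) and the regime split are fine as far as they go, and your instinct that the small-distance regime requires an auxiliary Bayes net $\mathcal{L}$ on the same DAG built from the coordinatewise overlaps $\sum_b\min\cprn{P_{i|\Pi(i)}(b|\cdot),Q_{i|\Pi(i)}(b|\cdot)}$ is exactly the right one --- this is the paper's local partial coupling. But the estimation strategy you hang on it has a genuine gap: you assert that $\dtv(P,Q)$ is reproduced by a telescoping \emph{product} of well-conditioned factors, each a single inference query to $\mathcal{L}$, and you never exhibit that identity. No such identity holds in the form you describe. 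Decomposing by the first coordinate at which the coupling separates gives a \emph{sum} $\sum_i\Prob[E_i]$, and that sum equals $Z:=\Prob_{\mathcal{L}}[X\neq Y]$, not $\dtv(P,Q)$; the two can differ by a factor as large as $\Theta(n)$ (already for product distributions a local coupling is not optimal, as the paper's footnote example shows). Moreover your claim that each factor is ``bounded away from the degenerate values'' is unsupported --- conditional overlaps can be arbitrarily close to $0$ or $1$. You yourself flag this construction as the heart of the argument, and it is precisely the part that is missing.

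What the paper does instead is not a product decomposition but importance sampling. Writing $g(w)=P(w)-\prod_i\min\cprn{P_{i|\Pi(i)}(w_i|w_{\Pi(i)}),Q_{i|\Pi(i)}(w_i|w_{\Pi(i)})}$ and $Z=\sum_w g(w)=\Prob_{\mathcal{L}}[X\neq Y]$, one has $\dtv(P,Q)=Z\cdot\cexpectq{f}{\pi}$ with $f(w)=\max\cprn{0,P(w)-Q(w)}/g(w)\in[0,1]$ and $\pi(w)=g(w)/Z$. The constant $Z$ is computed \emph{exactly} by one inference query to $\mathcal{L}$, samples from $\pi$ are generated coordinate by coordinate via further inference queries (conditional marginals of $g$), and the key lemma $Z\leq 2n\cdot\dtv(P,Q)$ guarantees $\cexpectq{f}{\pi}\geq 1/(2n)$, so $O(n^2\eps^{-2}\log\delta^{-1})$ samples suffice by Hoeffding regardless of how small $\dtv(P,Q)$ is --- no regime split and no multiplication of noisy factors. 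To repair your write-up you would need to (a) prove the analogue of $Z\leq 2n\cdot\dtv(P,Q)$, and (b) replace the telescoping product by this ratio estimator together with a sampler for $\pi$ implemented by inference queries.
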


It is known that probabilistic inference computation over Bayes nets in general is a $\#\P$-hard problem and hence exact $\dtv$ computation reduces to probabilistic inference over Bayes nets~\cite{cooper1990computational}.
However, a salient feature of our reduction is that it {\em preserves the structure} of the Bayes net.
Note that exact $\dtv$ computation is $\#\P$-complete even for product distributions for which inference computation is straightforward~\cite{BGMMPV23}.

A conceptual contribution of our work is the introduction of a new notion of {\em partial coupling} between two probability distributions, which is a relaxation of the classical notion of coupling (Definition~\ref{def:partial-coupling}).
Specifically, we illustrate that while establishing a computationally efficient coupling for distributions such as Bayesian networks may not be possible, it is possible to define a computationally efficient {\em partial coupling}.
Remarkably, we show that a {\em partial coupling} is adequate for approximating the total variation distance.
The technique of coupling, introduced by Doeblin in 1938~\cite{Doe38}, has been fundamental in the realms of computer science and statistics for over four decades, underpinning some of the most seminal results~\cite{Lin02,LevinPeresWilmer2006,MT12}.
In a similar vein, we believe the notion of {\em partial coupling} possesses the potential to become an essential tool in the toolkit of these domains.

The aforementioned reduction from total variation distance and probabilistic inference leads to efficient $\dtv$ estimation algorithms for any class of Bayes nets that admits efficient probabilistic inference algorithms.
In particular, it leads to the first polynomial-time randomized approximation scheme for calculating the total variation distance between two Bayes nets of treewidth $O(\log n)$, since the well-known variable elimination algorithm can be used for efficient probabilistic inference for such Bayes nets.

\begin{theorem}[Informal]
\label{thm:FPRAS-bounded-tw}
There is an FPRAS for estimating the TV distance between two Bayes nets of treewidth $O(\log n)$ that are defined over the same DAG of $n$ nodes.
\end{theorem}

Prior to our work, such approximation schemes were known only for product distributions, which are Bayes nets over a graph with no edges~\cite{fgjw22}.
In particular, designing an FPRAS for estimating TV distance between Bayes nets over trees (which are graphs with treewidth $1$) was an open question.
Our result resolves this open question.
In fact, \Cref{thm:FPRAS-bounded-tw} shows that it is indeed possible to obtain an FPRAS for a large class of Bayes nets, namely Bayes nets of $O(\log n)$ treewidth.

Note that the setting of \Cref{thm:FPRAS-bounded-tw} (whereby the Bayes net distributions considered are over the same DAG) is practically relevant where one learns parameters of a Bayes net for a fixed structure from different batches and a natural question is whether the two models are close to each other or not.
The TV distance-based approaches have played a significant role in the testing and improvement of constrained samplers~\cite{golia2021designing}.

Our next set of results focuses on the case when one of the distributions is the uniform distribution.
We first prove that the exact computation of the TV distance between a Bayes net distribution and the uniform distribution is $\#\P$-complete.

\begin{theorem}
\label{thm:sharp-P-complete}
It is $\#\P$-complete to exactly compute the TV distance between a Bayes net that has bounded in-degree and the uniform distribution.
\end{theorem}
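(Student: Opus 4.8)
The plan is to prove the two directions of $\#\P$-completeness separately for the problem \distbayesu, i.e., computing $\dtv(P,U)$ where $P$ is a Bayes net of bounded in-degree over $\bool^n$ and $U$ is the uniform distribution. Since $\dtv$ is a rational number rather than an integer, I would work with the scaled quantity $N := 2^{n+1}D\cdot\dtv(P,U)$, where $D$ is a common denominator of all the rational CPT entries of $P$; clearing denominators shows that $N$ is a nonnegative integer and that $\dtv(P,U)$ and $N$ are interreducible in polynomial time. The claim is then that $N$ is a $\#\P$-complete integer-valued function.

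For membership, first I would write $\dtv(P,U)=\tfrac12\sum_{x}\sabs{P(x)-2^{-n}}$. Setting $M:=2^nD$, each $MP(x)$ is a polynomial-bit integer computable exactly in polynomial time (as $P(x)$ is a product of at most $n$ rational CPT entries), and $MU(x)=D$. Because $\sum_x\prn{MP(x)-D}=M-D2^n=0$, the positive and negative parts of the sum coincide, so $N/2=\sum_{x:MP(x)>D}\prn{MP(x)-D}$. This is realized by a polynomial-time nondeterministic machine that guesses $x$, computes the integers $MP(x)$ and $D$, and — when $MP(x)>D$ — guesses an integer $k$ and accepts iff $D<k\le MP(x)$; its number of accepting paths is exactly $N/2$, placing $N$ in $\#\P$.

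For hardness I would reduce from \KNAPSACK, i.e., from counting the assignments $x\in\bool^n$ with $\sum_i w_ix_i\le b$, which is $\#\P$-complete. The key difficulty is that merely encoding the knapsack predicate by \emph{deterministic} auxiliary variables fails: such variables make $P$ uniform on its support, so the count washes out of $\dtv(P,U)$. I would instead build a Bayes net with input bits $X_1,\dots,X_n\sim\Bern(1/2)$, a bounded-fan-in subcircuit of deterministic auxiliary bits $Z$ computing the running sum $\sum_iw_iX_i$ and a single comparison bit $C=\mathbbm{1}[\sum_iw_iX_i\le b]$, and a final bit $Y$ with $\cpr{Y=1\mid C=1}=1$ and $\cpr{Y=1\mid C=0}=1/2$. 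Since addition and comparison are computable by bounded-fan-in circuits, every node has in-degree $O(1)$. A direct computation then shows that the event $A=\brkts{P>U}$ contains exactly one assignment of mass $2^{-n}$ for each knapsack solution and two assignments totalling mass $2^{-n}$ for each non-solution, whence $P(A)=1$ and $\sabs{A}=2^{n+1}-K$, where $K$ is the number of solutions. Writing $m:=\sabs{Z}$, so that there are $n+m+1$ Boolean variables in all, this yields
\[
\dtv(P,U)=1-\prn{2^{n+1}-K}\,2^{-(n+m+1)},
\]
so a single oracle call recovers $K$ in polynomial time.

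The main obstacle I anticipate is exactly this hardness gadget: one must arrange it so that $\dtv(P,U)$ depends \emph{linearly and transparently} on the count while keeping the in-degree bounded. This forces the comparison against the threshold $b$ to be condensed into the single bit $C$ (rather than letting $Y$ read all sum bits), and forces a genuinely randomized output bit $Y$ to break the ``flat on support'' degeneracy that defeats the deterministic encoding. The remaining directions — that membership survives for general (not merely product) Bayes nets, and that the circuit-to-Bayes-net translation preserves both the distribution and the $O(1)$ in-degree — are then routine.
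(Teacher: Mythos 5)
Your proposal matches the paper's proof in all essentials: membership via a nondeterministic machine that guesses a sample point $x$ and then unary-guesses up to the scaled, integral excess $M\sabs{P(x)-\U(x)}$, and hardness via a bounded-fan-in deterministic circuit gadget on top of uniform input bits followed by a final output bit that is $1$ with probability $1$ on acceptance and uniform on rejection (the paper ORs the output gate with a fresh random bit, which is exactly your $\cpr{Y=1\mid C}\in\brkts{1,1/2}$ gadget); the paper reduces from $\#\mathrm{SAT}$ rather than \KNAPSACK, but the two closed forms for $\dtv(P,\U)$ agree. One small correction to the membership half: since $P(x)$ is a product of $n$ CPT entries, a single common denominator $D$ does not clear denominators --- you need $M=2^nD^n$ (or, as in the paper, $2^n$ times the product of \emph{all} CPT denominators) for $MP(x)$ to be an integer; this is still polynomially many bits, so the rest of your argument is unaffected.
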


To complement this result, we show that there is an FPRAS that estimates the TV distance between the uniform distribution and {\em any} Bayes net distribution.

\begin{theorem}[Informal]
\label{thm:FPRAS-bayes-net-vs-uniform}
There is an FPRAS for estimating the TV distance between a Bayes net and the uniform distribution.
\end{theorem}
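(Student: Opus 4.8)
The plan is to reduce the problem to estimating the mean of a bounded, efficiently computable random variable, and then to repair the one regime where plain sampling is too slow. Write $U$ for the uniform distribution on the domain $\mathcal D$ and recall the identity
\[
\dtv(P,U) \;=\; \cexpectq{\prn{1 - \tfrac{U(x)}{P(x)}}^{+}}{x\sim P},
\]
where $(t)^{+}=\max(0,t)$. Two facts make the right-hand side attractive: (i) a sample $x\sim P$ is produced in polynomial time by ancestral sampling along a topological order of $G$; and (ii) for such a sample both $P(x)=\prod_i P\prn{x_i\mid x_{\pa{i}}}$ and $U(x)=1/\sabs{\mathcal D}$ are evaluated \emph{exactly} in polynomial time, with no inference required. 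Hence the estimand $g(x):=\prn{1-U(x)/P(x)}^{+}$ is computable per sample and lies in $[0,1]$, and the empirical mean of $m$ independent samples is an unbiased estimator of $\dtv(P,U)$.

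First I would analyze this naive estimator. Since $g\in[0,1]$ has mean $\mu:=\dtv(P,U)$, a multiplicative Chernoff bound shows that $m=O\prn{\eps^{-2}\mu^{-1}\log(1/\delta)}$ samples already yield a $(1\pm\eps)$-relative approximation with probability $1-\delta$. This is a full FPRAS \emph{whenever} $\dtv(P,U)\ge 1/\poly{n}$, which disposes of the ``typical'' case and of any instance for which only inverse-polynomial relative accuracy is sought.

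The hard part is the regime of sub-polynomially small $\dtv(P,U)$: there the factor $1/\mu$ blows up, and, more fundamentally, any unbiased estimator with range $[0,1]$ has relative second moment $\expect{g^2}/\mu^2\le 1/\mu$, so no amount of clever reweighting with a bounded single-shot estimator escapes the $1/\dtv$ barrier. This barrier cannot be sidestepped by routing through \Cref{thm:dtv-to-inference}: for $Q=U$ the inference queries the reduction would pose are themselves $\#\P$-hard, since an exact inference oracle would yield exact $\dtv(P,U)$, which is $\#\P$-hard by \Cref{thm:sharp-P-complete}. To obtain relative accuracy in time independent of $\mu$, I would instead build a \emph{self-reducible} estimator along the topological order. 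The decisive structural observation is that, in topological order, the one-step conditional of $P$ is read directly off the network, $P\prn{X_i\mid X_1,\dots,X_{i-1}}=P\prn{X_i\mid X_{\pa{i}}}$, while the one-step conditional of $U$ is again uniform; consequently every one-dimensional conditional distance $\dtv\!\prn{P\prn{\cdot\mid x_{\pa{i}}},\,\mathrm{Unif}}$ and its maximal coupling are computed \emph{exactly}. This is precisely the ingredient needed to simulate a partial coupling of $P$ and $U$ coordinate by coordinate, handling each coordinate's overlap in closed form and sampling only the residual inter-coordinate correction.

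The crux — and the step I expect to be the main obstacle — is to turn this partial coupling into an unbiased estimator of $\dtv(P,U)$ whose relative second moment is $\poly{n}$ rather than $1/\dtv(P,U)$; equivalently, to show that the coupling's disagreement events telescope and concentrate so that $\poly{n,1/\eps,\log(1/\delta)}$ simulated trajectories suffice no matter how small the distance is. I would invoke the partial-coupling machinery developed for \Cref{thm:dtv-to-inference}, specialized to $Q=U$, where every quantity the machinery requires is one of the exactly computable one-dimensional conditionals above; combining this variance-controlled estimator with the naive estimator of the first step (using whichever is cheaper for the observed scale of $\mu$) then yields an FPRAS valid for \emph{every} Bayes net $P$, of arbitrary treewidth.
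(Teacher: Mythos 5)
Your first step (plain Monte Carlo, which is an FPRAS whenever $\dtv(P,\U)\ge 1/\poly{n}$) is fine and corresponds to how the paper handles the regime $\dtv(P,\U)>\frac{1}{16\ell^{d+1}}$, where $d$ is the maximum in-degree: there the distance is at least inverse-polynomial in the input length (which is at least $\ell^{d+1}$), so additive estimation suffices. The gap is in the small-distance regime, and it is exactly the step you flag as ``the main obstacle'': invoking the partial-coupling machinery of \Cref{thm:dtv-to-inference} with $Q=\U$ does not work for arbitrary-treewidth $P$. The fact that each one-dimensional conditional $\min\cprn{P_{i|\Pi(i)}(b\,|\,c),1/\ell}$ is computable in closed form is not the bottleneck of that machinery; the bottleneck is that it needs \emph{global} inference over the coupling network $\mathcal{L}$ (computing $Z=\Prob_{\mathcal{L}}[X\neq Y]=1-\sum_w\prod_i\min\cprn{P_{i|\Pi(i)}(w_i|w_{\Pi(i)}),1/\ell}$ and the conditional marginals $Z_{b_1,\dots,b_k}$ needed to sample from $\pi$). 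Since $\mathcal{L}$ lives on the same DAG as $P$, these are $\#\P$-hard sums for general DAGs even when $Q=\U$ --- a point you yourself make one sentence earlier. Ancestral sampling from $\mathcal{L}$ is easy, but that only gives you an additive estimate of $Z$, which is useless since $Z\le 2n\cdot\dtv(P,\U)$ can be exponentially small.

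The idea you are missing is the paper's \Cref{lem:ratio}, which makes all coupling machinery unnecessary: if $\gamma:=\dtv(P,\U)\le\frac{1}{16\ell^{d+1}}$, then every CPT entry of $P$ must be within $8\gamma\ell^{d}$ of $1/\ell$ (because the joint distribution of any node together with its parents is within TV distance $\gamma$ of uniform), and hence $\sabs{P(x)\ell^n-1}\le 16\gamma\ell^{d+1}n$ for \emph{every} $x$. Consequently the naive estimand --- the paper uses $\max\cprn{0,P(x)\ell^n-1}$ with $x\sim\U$, but the same bound applies to your $\prn{1-\U(x)/P(x)}^{+}$ with $x\sim P$ --- has range $[0,\,O(\gamma\ell^{d+1}n)]$ rather than $[0,1]$. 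Its range shrinks proportionally to its mean $\gamma$, so your ``$1/\mu$ barrier'' for bounded estimators simply does not apply, and Hoeffding gives a $(1+\varepsilon)$-relative approximation from $m=O\cprn{n^2\ell^{2d+2}\varepsilon^{-2}\log\delta^{-1}}$ samples, independently of how small $\gamma$ is. Without this pointwise bound (or some substitute for it), your plan for the sub-polynomial regime does not go through.
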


\subsection{Related Work}

Koller and Friedman~\cite{koller2009probabilistic} provide a comprehensive overview of probabilistic graphical models (such as Bayes nets).

\paragraph{TV Distance Computation.}

Recently,~\cite{BGMMPV23} initiated the study of the computational complexity aspects of TV distance over graphical models.
In that work, they proved that exactly computing the TV distance between product distributions is $\#\P$-complete, that it is $\NP$-hard to decide whether the TV distance between two Bayes nets of in-degree $2$ is equal to $0$ or not, and also gave an FPTAS for approximating the TV distance between an arbitrary product distribution and any product distribution that has a constant number of distinct marginals (note that this includes the uniform distribution).
In a subsequent work,~\cite{fgjw22} gave an FPRAS for approximating the TV distance between two arbitrary product distributions.
Later,~\cite{feng2023ondeterministically} gave a \emph{deterministic} approximation algorithm (FPTAS) for the same task.

TV distance estimation was also studied previously from a more complexity-theoretic and cryptographic viewpoint.
\cite{SV03} established in a seminal work that additively approximating the TV distance between two distributions that are samplable by Boolean circuits is hard for $\SZK$ (Statistical Zero Knowledge).
\cite{GoldreichSV99} showed that the problem of deciding whether a distribution samplable by a Boolean circuit is close or far from the uniform distribution is complete for the complexity class $\NISZK$ (Non-Interactive Statistical Zero Knowledge).

Additive approximation of TV distance is much easier.
\cite{CR14} showed how to additively estimate TV distance between distributions that can be efficiently sampled and whose probability mass functions can be efficiently evaluated.
Clearly, Bayes nets satisfy both conditions (where ``efficient'' means as usual polynomial in the number of parameters).
\cite{0001GMV20} extended this idea to develop polynomial-time algorithms for additively approximating the TV distance between two bounded in-degree Bayes nets using a polynomial number of samples from each.

\paragraph{Probabilistic Inference.}

There is a significant body of work dedicated to exact probabilistic inference.
As we mentioned earlier, some algorithmic paradigms that have been developed for the task of probabilistic inference are message passing~\cite{pearl1988probabilistic}, variable elimination~\cite{dechter1999bucket}, and junction-tree propagation~\cite{lauritzen1988local}.
Recently,~\cite{klinkenberg2023exactbayesian} presented an exact Bayesian inference method for inferring posterior distributions encoded by probabilistic programs.
\cite{zaiser2023exact} present an exact Bayesian inference method for discrete statistical models, by introducing a probabilistic programming language (based on probability generating functions) that supports discrete and continuous sampling, and conditioning on discrete events (among others).
\cite{holtzen2020dice} develop a domain-specific probabilistic programming language, called Dice, that exploits program structure in order to factorize inference, enabling them to perform exact inference on large probabilistic programs.
\cite{saad2021sppl} present the Sum-Product Probabilistic Language (SPPL), a new probabilistic programming language that automatically delivers exact solutions to a broad range of probabilistic inference queries enabling them to give exact algorithms for conditioning on and computing probabilities of events.

With the advent of big data and the increasing complexity of models, traditional exact inference methods may become computationally infeasible.
Approximate inference techniques, such as variational inference and sampling methods like Markov Chain Monte Carlo, provide efficient and scalable alternatives to tackle these challenges.
\cite{minka2001expectation} introduces the expectation propagation algorithm for approximate Bayesian inference.
\cite{hoffman2013stochastic} propose a stochastic variational inference algorithm for large-scale Bayesian inference.
\cite{murphy2013loopy} investigate the effectiveness of loopy belief propagation.
\cite{ranganath2014black} introduce black box variational inference, a flexible and scalable approach for approximate Bayesian inference.
\cite{rezende2015variational} propose a variational inference method using normalizing flows, a class of flexible and expressive transformations.
\cite{blei2017variational} provide a comprehensive review of variational inference, a family of methods for approximate Bayesian inference.

\subsection{Organization}

The rest of the paper is organized as follows.
We provide some background material in \Cref{sec:preliminaries} and a technical overview of our results in \Cref{sec:technical}.
We prove the main results as follows:
We show \Cref{thm:dtv-to-inference} in \Cref{sec:dtv-to-inference}; \Cref{thm:FPRAS-bounded-tw} in \Cref{sec:FPRAS-bounded-tw};
\Cref{thm:sharp-P-complete} in \Cref{sec:hardness};
\Cref{thm:FPRAS-bayes-net-vs-uniform} in \Cref{sec:Bayes-nets-vs-Uniform}.
We conclude in \Cref{sec:conclusion}.

\section{Preliminaries}

\label{sec:preliminaries}

We use $\sqbra{n}$ to denote the set $\brkts{1,\dots,n}$ and $\log$ to denote $\log_2$.
Throughout the paper, we shall assume that all probabilities are represented as rational numbers of the form $a/b$.
We denote the uniform distribution by $\U$.

The following concentration inequality will be useful in our proofs.

\begin{lemma}[Hoeffding's inequality]
\label{lem:Hoeffding}
Let $X_1,\dots,X_n$ be independent random variables such that $a_{i}\leq X_{i}\leq b_{i}$ for all $1\leq i\leq n$.
Then
\[
\cpr{\abs{\sum_{i=1}^nX_i-\cexpect{\sum_{i=1}^nX_i}}\geq t}
\leq 2\exp\!\left(-2t^2/\sum_{i=1}^{n}(b_{i}-a_{i})^{2}\right).
\]
\end{lemma}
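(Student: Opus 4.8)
The plan is to prove this via the exponential moment method (the Chernoff-bound technique) combined with the standard moment-generating-function estimate for bounded random variables. Write $S=\sum_{i=1}^n X_i$ and $\mu=\cexpect{S}$, and center each variable by setting $Y_i = X_i-\cexpect{X_i}$, so that each $Y_i$ has mean zero and lies in an interval of length $b_i-a_i$. It suffices to bound the upper tail $\cpr{S-\mu\ge t}$: the lower tail $\cpr{S-\mu\le -t}$ follows by applying the identical argument to the variables $-Y_i$ (which are again mean-zero and supported on intervals of the same widths), and a union bound over the two tails accounts for the factor of $2$ in the statement.

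For the upper tail, first I would fix a parameter $s>0$ and apply Markov's inequality to the nonnegative random variable $e^{s(S-\mu)}$, which gives $\cpr{S-\mu\ge t}\le e^{-st}\,\cexpect{e^{s(S-\mu)}}$. Since the $X_i$, and hence the $Y_i$, are independent, the moment generating function factorizes: $\cexpect{e^{s(S-\mu)}}=\prod_{i=1}^n \cexpect{e^{sY_i}}$. This reduces the problem to controlling each single-variable factor $\cexpect{e^{sY_i}}$.

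The crux is to bound each such factor, and here I would invoke Hoeffding's lemma: for a mean-zero random variable $Y$ supported on an interval $[c,d]$, one has $\cexpect{e^{sY}}\le \exp\!\left(s^2(d-c)^2/8\right)$. Applied to $Y_i$, whose support has width $b_i-a_i$, this yields $\cexpect{e^{sY_i}}\le \exp\!\left(s^2(b_i-a_i)^2/8\right)$, and therefore $\cpr{S-\mu\ge t}\le \exp\!\left(-st+\tfrac{s^2}{8}\sum_{i=1}^n(b_i-a_i)^2\right)$. Minimizing the exponent over $s>0$ is a one-variable quadratic optimization with optimum $s=4t/\sum_{i=1}^n(b_i-a_i)^2$, producing $\cpr{S-\mu\ge t}\le \exp\!\left(-2t^2/\sum_{i=1}^n(b_i-a_i)^2\right)$; combining the two tails then finishes the proof.

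The main obstacle is Hoeffding's lemma itself, as the rest is essentially bookkeeping. I expect to establish it by using the convexity of $y\mapsto e^{sy}$ to dominate $e^{sY}$ pointwise by the chord joining the endpoints $(c,e^{sc})$ and $(d,e^{sd})$, then taking expectations and exploiting $\cexpect{Y}=0$ to obtain a clean expression for $\psi(s)=\log\cexpect{e^{sY}}$. The standard route is to show by a second-order Taylor expansion that $\psi''(s)$ equals the variance of a random variable taking values in $[c,d]$, which is at most $(d-c)^2/4$; integrating twice with the initial conditions $\psi(0)=\psi'(0)=0$ then yields the desired bound.
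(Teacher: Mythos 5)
Your proposal is the standard and correct proof of Hoeffding's inequality (Chernoff's exponential-moment method, factorization by independence, Hoeffding's lemma for each centered summand, optimization over $s$, and a union bound over the two tails); the paper states this lemma as a known classical fact without proof, so there is nothing to compare against. One minor remark: the chord/convexity bound and the computation showing $\psi''(s)$ is a variance bounded by $(d-c)^2/4$ are really two alternative routes to Hoeffding's lemma rather than steps of a single argument, but either one completes the proof.
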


We shall use the following notion of an approximation algorithm.

\begin{definition}[FPRAS]
A function $f: \bool^* \to \R$ admits a \emph{fully polynomial-time randomized approximation scheme (FPRAS)} if there is a \emph{randomized} algorithm $\mathcal{A}$ such that for all $n$ and all inputs $x\in\bool^n$, $\varepsilon>0$, and $\delta>0$, $\mathcal{A}$ outputs a $\prn{1+\varepsilon}$-relative approximation of $f(x)$, i.e., a value $v$ that lies in the interval $[f(x)/(1+\varepsilon), (1+\varepsilon)f(x)]$, with probability $1-\delta$.
The running time of $\mathcal{A}$ is polynomial in $n,1/\varepsilon,1/\delta$.
\end{definition}

\subsection{Bayes Nets}

For a directed acyclic graph (DAG) $G$ and a node $v$ in $G$, let $\Pi(v)$ denote the set of parents of $v$.

\begin{definition}[Bayes nets]
A \emph{Bayes net} is specified by a directed acyclic graph (DAG) over a vertex set $[n]$ and a collection of probability distributions over symbols in $\sqbra{\ell}$, as follows.
Each vertex $i$ is associated with a random variable $X_i$ whose range is $[\ell]$.
Each node $i$ of $G$ has a Conditional Probability Table (CPT) that describes the following:
For every $x \in [\ell]$ and every $y \in [\ell]^k$, where $k$ is the size of $\Pi(i)$, the CPT has the value of $\Prob[X_i =x|X_{\Pi\prn{i}} = y]$ stored.
Given such a Bayes net, its associated probability distribution $P$ is given by the following:
For all $x\in [\ell]^n$, $P\cprn{x}$ is equal to
\[
\cprq{X = x}{P}
=\prod_{i=1}^n\cprq{X_i=x_i|X_{\Pi\prn{i}}=x_{\Pi\prn{i}}}{P}.
\]
Here $X$ is the joint distribution $(X_1,\ldots,X_n)$ and $x_{\Pi\prn{i}}$ is the projection of $x$ to the indices in $\Pi(i)$.
\end{definition}

Note that $P(x)$ can be computed in linear time by using the CPTs of $P$ to retrieve each $\cprq{X_i=x_i|X_{\Pi\prn{i}}=x_{\Pi\prn{i}}}{P}$.
We also use $P_{i|\Pi(i)}\cprn{x_i|x_{\Pi(i)}}$ to denote this probability.

An important notion is that of the moralization of a Bayes net.

\begin{definition}[Moralization of Bayes nets]
\label{def:moralization}
Let $B$ be a Bayes net over a DAG $G$.
The \emph{moralization of $B$} is the undirected graph that is obtained from $G$ as follows.
For every node $u$ of $G$ and any pair $\prn{v,w}$ of its parents $\Pi\cprn{u}$ if $v$ and $w$ are not connected by some edge in $G$, then add the edge $\prn{v,w}$.
(Note that after this step the parents of every node of $G$ form a clique.)
Finally, make all edges of $G$ undirected.
\end{definition}

We shall require the following simple observation.

\begin{lemma}
\label{lem:moralization-complexity}
Given a Bayes net over $n$ nodes, its moralization can be computed in time $O\cprn{\cpoly{n}}$.
\end{lemma}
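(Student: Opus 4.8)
The plan is to directly bound the running time of the two-phase procedure of \Cref{def:moralization}: first adding a ``marriage'' edge between every pair of co-parents that is not already adjacent, and then discarding all edge orientations. Since only the graph structure of the Bayes net is relevant to moralization, the CPTs play no role in the analysis.

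First I would fix a representation supporting constant-time adjacency tests and edge insertions. An $n \times n$ adjacency matrix suffices, and it can be built from the input description of $G$ in $O(n^2)$ time; from it I can also read off, for each node $u$, its parent set $\Pi(u)$ by scanning the relevant column, at a total cost of $O(n^2)$ over all nodes. Next I would account for the marriage phase. For a fixed node $u$ the procedure examines every unordered pair $\{v,w\} \subseteq \Pi(u)$, of which there are at most $\binom{|\Pi(u)|}{2} \le \binom{n}{2}$; each pair requires one $O(1)$ adjacency test and at most one $O(1)$ insertion. Summing over the $n$ nodes yields at most $n\binom{n}{2} = O(n^3)$ elementary operations. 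Symmetrizing the adjacency matrix to make every edge undirected is a final $O(n^2)$ pass, so the total running time is $O(n^3) = O(\poly{n})$, as claimed.

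There is no genuine obstacle here; the proof is a routine counting argument over nodes and pairs of parents. The only point meriting attention is the data-structure choice: with an adjacency matrix each edge query is $O(1)$, whereas a naive adjacency-list representation would make each query cost $O(n)$ and worsen the bound to $O(n^4)$ — still polynomial, but an unnecessary loss. Because the lemma only asserts a $\poly{n}$ bound, either representation in fact suffices to establish the claim.
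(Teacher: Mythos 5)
Your proposal is correct and follows essentially the same route as the paper's proof: iterate over the $n$ nodes, spend $O(n^2)$ per node forming the clique on its (at most $n$) parents, and finish with an $O(n^2)$ pass to drop edge orientations, for a total of $O(n^3)$. The only difference is that you make the data-structure choice explicit, which the paper leaves implicit.
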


\begin{proof}
Let $B$ be a Bayes net over a DAG $G$ that has $n$ nodes.
Let $v$ be a node of $G$ and let $\Pi\cprn{v}$ be the set of the parents of $v$.
We can construct a clique among the nodes of $\Pi\cprn{v}$ in time $O\cprn{n^2}$, since $\abs{\Pi\cprn{v}}\leq n$.
Therefore we can construct all of the required cliques in time $n\cdot O\cprn{n^2}=O\cprn{n^3}$.
Finally, we can make all directed edges of $G$ undirected in time $O\cprn{n^2}$.
This yields a total running time of $O\cprn{n^3}$.
\end{proof}

\subsection{Total Variation Distance}

The following notion of distance is central in this work.

\begin{definition}[Total variation distance]
For probability distributions $P,Q$ over a finite sample space $\mathcal{D}$, the \emph{total variation distance} of $P$ and $Q$ is
\[
\dtv(P,Q)
=\max_{S \subseteq \mathcal{D}}\cprn{P(S)-Q(S)}.
\]
\end{definition}

Note that $\dtv(P,Q) = \frac{1}{2} \sum_{w\in\mathcal{D}}\abs{P\cprn{w}-Q\cprn{w}}$.
Equivalently,
\[
\dtv(P,Q)
= \sum_{w \in \mathcal{D}}\max(0, P(w) - Q(w))
= \sum_{w \in \mathcal{D}}(P(w) -\min(P(w), Q(w))).
\]

\subsection{Probabilistic Inference}

In this work, {\em probabilistic inference} is the following computational task:
Given (a representation of) random variables $X_1,\dots,X_n$ and (a representation of) sets $S_1,\dots,S_n$ such that for all $i$ the set $S_i$ is a subset of the range of $X_i$, compute $\cpr{X_1\in S_1, \dots, X_n\in S_n}$.%
\footnote{Note that a notion of probabilistic inference that has previously been considered~\cite{kwisthout2010necessity} is the following:
Given random variables $X_1,\dots,X_n$, a set $I = \{i_1, \cdots, i_k\}\subseteq\sqbra{n}$, values $x_{i_1},\dots,x_{i_k}$ that belong to the ranges of $X_{i_1},\dots,X_{i_k}$, respectively, and an event $E$, compute the probability $\cpr{\prn{X_{i_1},\dots,X_{i_k}}=\prn{x_{i_1},\dots,x_{i_k}}|E}$.
However, algorithms such as variable elimination for inference in this sense also work for the notion we consider. }

Let us now define probabilistic inference (oracle) queries.

\begin{definition}[Probabilistic inference query over Bayes nets]
A {\em probabilistic inference query} takes a description of a Bayes net distribution $P$ over $n$ nodes and alphabet size $\ell$ and descriptions of sets $S_1,\dots,S_n$, where for all $1\leq i\leq n$, $S_i \subseteq [\ell]$, and returns  the value of $\Prob_{P}[{X_1\in S_1, \dots, X_n\in S_n}]$ in one time step.
\end{definition}

\subsection{Treewidth and Tree Decompositions}

We require the definition of treewidth.

\begin{definition}
\label{def:tw}
A \emph{tree decomposition of an undirected graph $G = (V, E)$} is a tree $T$ with nodes $X_1, \dots, X_n$, where each $X_i$ is a subset of $V$, satisfying the following properties (the term node is used to refer to a vertex of $T$ to avoid confusion with vertices of $G$):
$(a)$
The union of all sets $X_i$ equals $V$.
That is, each graph vertex is contained in at least one tree node.
$(b)$
If $X_i$ and $X_j$ both contain a vertex $v$, then all nodes $X_k$ of $T$ in the (unique) path between $X_i$ and $X_j$ contain $v$ as well.
Equivalently, the tree nodes containing vertex $v$ form a connected subtree of $T$.
$(c)$
For every edge $(v_1, v_2)$ in the graph, there is a subset $X_i$ that contains both $v_1$ and $v_2$.
That is, vertices are adjacent in the graph only when the corresponding subtrees have a node in common.
The \emph{width of a tree decomposition} is the size of its largest set $X_i$ minus one.
The \emph{treewidth ${\rm tw}(G)$} is the minimum width among all possible tree decompositions of $G$.
\end{definition}

We shall also extend the notion of treewidth to Bayes nets, as follows.

\begin{definition}
The \emph{treewidth of a Bayes net} is defined to be equal to the treewidth of its moralization.
\end{definition}

We require the following two theorems, \Cref{thm:tree-decomposition} and \Cref{thm:variable-elimination}, respectively; \Cref{thm:tree-decomposition} is about a tree decomposition algorithm and \Cref{thm:variable-elimination} is about the variable elimination algorithm.

\begin{theorem}[Tree decomposition~\cite{RS84}]
\label{thm:tree-decomposition}
There is a $O\cprn{w3^{3w}n^2}$-time algorithm that finds a tree decomposition of width $4w + 1$, if the treewidth of the input graph is at most $w$.
\end{theorem}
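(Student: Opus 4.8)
The plan is to reprove the classical approximation algorithm of \cite{RS84} via the method of balanced separators, which is the standard route to a constant-factor tree-decomposition approximation. The mathematical heart is the following existence lemma, which is precisely where the hypothesis ${\rm tw}(G)\le w$ enters: for every $W\subseteq V(G)$ there is a set $S\subseteq V(G)$ with $|S|\le w+1$ such that every connected component of $G-S$ contains at most $\tfrac12|W|$ vertices of $W$. I would prove this by taking any (unknown, but guaranteed to exist) tree decomposition of $G$ of width at most $w$ and locating a \emph{centroid} bag: orienting each edge of the decomposition tree towards the side carrying the larger $W$-weight produces a node $t^{*}$ whose bag $B_{t^{*}}$, of size at most $w+1$, is balanced, because by the connected-subtree property of \Cref{def:tw} each component of $G-B_{t^{*}}$ is confined to the vertices introduced by a single subtree hanging off $t^{*}$. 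Setting $S=B_{t^{*}}$ proves the lemma. Algorithmically I do not know this decomposition, but I only need to \emph{find} some balanced $S$ of size at most $w+1$, and the lemma guarantees one exists.

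Given the lemma, the algorithm is a recursion that maintains a boundary set $W$ under the invariant $|W|\le 3w+1$ and builds a tree decomposition of the processed subgraph whose root bag contains $W$. At each step I find a balanced separator $S$ of size at most $w+1$, emit the bag $W\cup S$ (of size at most $4w+2$, hence width at most $4w+1$), and recurse on each connected component $C$ of $G-S$ with new boundary $W_C=(W\cup S)\cap N[C]$. The balance guarantee gives $|W_C|\le \tfrac12|W|+|S|\le \tfrac12(3w+1)+(w+1)\le 3w+1$ for $w\ge1$, so the invariant is preserved. I would then verify the three axioms of \Cref{def:tw} directly: vertex and edge coverage hold because every edge eventually lands inside a single small terminal component and every separator vertex is placed in its parent bag, while the connectivity axiom holds because a vertex is passed from a bag only to those children whose component it borders.

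To \emph{find} a balanced separator within the stated time I enumerate all tripartitions $W=W_S\sqcup W_A\sqcup W_B$ of the boundary into ``in the separator / side $A$ / side $B$'', of which there are at most $3^{|W|}\le 3^{3w+1}=O(3^{3w})$; for each I test, via a single max-flow computation (Menger's theorem), whether there is a vertex separator of size at most $w+1$ containing $W_S$ and separating $W_A$ from $W_B$, retaining only balanced ones. Since a graph of treewidth at most $w$ has only $O(wn)$ edges, each flow uses at most $w+1$ augmenting paths at cost $O(wn)$ each, and there are $O(n)$ recursion nodes; multiplying the enumeration factor, the flow cost, and the number of nodes yields the claimed $O(w\,3^{3w}n^2)$ bound after routine sparsity bookkeeping.

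The step I expect to be the main obstacle is pinning down the constants so that the width comes out to precisely $4w+1$ rather than a looser $4w+O(1)$: this forces the boundary invariant to be exactly $|W|\le 3w+1$ and the balance ratio to be exactly $\tfrac12$, and one must check both that the recursion preserves this invariant and that it makes genuine progress, so that it terminates with $O(n)$ bags. Verifying that the recursively glued bags satisfy all three axioms of \Cref{def:tw} simultaneously---particularly the connected-subtree condition across the boundary-passing recursion---is the most bookkeeping-heavy part, though each individual check becomes routine once the balanced-separator lemma is in hand.
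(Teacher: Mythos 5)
First, note that the paper offers no proof of this statement: it is imported from \cite{RS84} and used as a black box, so there is no in-paper argument to compare yours against. Your proposal follows the standard balanced-separator route by which results of this type are established, and its skeleton is sound: the centroid-bag argument for the existence lemma is correct, and the recursion with boundary invariant $\abs{W}\leq 3w+1$ does yield bags of size at most $4w+2$ \emph{provided} each recursive call inherits at most $\tfrac12\abs{W}$ of the old boundary.

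The genuine gap is in the step you yourself flagged: the algorithmic search does not deliver the $\tfrac12$-balance that your arithmetic needs. Enumerating tripartitions $W=W_S\sqcup W_A\sqcup W_B$ and running a max-flow certifies only that the returned separator $S'$ separates $W_A$ from $W_B$, so each component of $G-S'$ meets at most $\max\cprn{\abs{W_A},\abs{W_B}}$ boundary vertices. To make the existence lemma's separator $S$ visible to this enumeration you must group the components of $G-S$ into the two sides $A$ and $B$, and three components each carrying $\tfrac13\abs{W}$ of the weight cannot be two-colored better than $\tfrac23\abs{W}$ per side; moreover, minimum separators are not unique, so the flow may return a different one than the balanced separator whose existence you proved, and ``retaining only balanced ones'' then has no completeness guarantee at ratio $\tfrac12$. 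With the honest $\tfrac23$-balance the invariant must be relaxed to $\abs{W}\leq 3w+3$, which gives width $4w+3$ (the constant in Robertson--Seymour's own analysis and in standard textbook treatments), not $4w+1$. A smaller issue: your accounting gives $O\cprn{n}\cdot 3^{3w}\cdot\prn{w+1}\cdot O\cprn{wn}=O\cprn{w^23^{3w}n^2}$, an extra factor of $w$ over the claim. Neither discrepancy matters for how the theorem is used here---any $O\cprn{w}$-width decomposition computable in time $2^{O\prn{w}}\cpoly{n}$ suffices for \Cref{lem:inference-for-bounded-tw}---but as a proof of the statement as literally written, the constant $4w+1$ does not follow from your argument.
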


We will make use of the variable elimination algorithm to efficiently implement probabilistic inference queries for bounded treewidth Bayes nets.

\begin{theorem}[Variable elimination; following Zhang and Poole~\cite{ZP94}]
\label{thm:variable-elimination}
There is an algorithm, called the \emph{variable elimination algorithm}, for the following task:
Given a Bayes net $B$ over variables $X_1,\dots,X_n\in\sqbra{\ell}$, sets $S_1,\dots,S_n\subseteq\sqbra{\ell}$, the moralization $M_B$ of $B$, and a tree decomposition $\mathcal{T}$ of width $w$ of $M_B$, compute the probability $\cprq{X_1\in S_1,\dots,X_n\in S_n}{B}$.
The running time of this algorithm is $O\cprn{n\ell^w}$.
\end{theorem}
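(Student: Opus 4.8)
The plan is to express the target probability as a single sum of a product of local factors and then evaluate that sum by the distributive law, using the tree decomposition $\mathcal{T}$ to choose an elimination schedule that keeps every intermediate factor small. For each node $i$ let $\phi_i$ be the factor given by its CPT, i.e.\ the function of the variables in $\{X_i\}\cup\Pi(X_i)$ whose value at $\prn{x_i,x_{\Pi(X_i)}}$ is $\cprq{X_i=x_i\mid X_{\Pi(X_i)}=x_{\Pi(X_i)}}{B}$, and for each $i$ let $\mathbf{1}_{S_i}$ be the indicator factor on $X_i$ that equals $1$ on $S_i$ and $0$ elsewhere. By the Bayes-net factorization,
\[
\cprq{X_1\in S_1,\dots,X_n\in S_n}{B}
=\sum_{x\in\sqbra{\ell}^n}\;\prod_{i=1}^n \phi_i\prn{x_i,x_{\Pi(X_i)}}\cdot\prod_{i=1}^n\mathbf{1}_{S_i}\prn{x_i},
\]
so it suffices to evaluate this sum of products efficiently, and the event constraints $S_i$ have been folded into the indicator factors.

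The crucial structural fact I would exploit is that the scope of each factor $\phi_i$, namely $\{i\}\cup\Pi(i)$, is a clique of the moralization $M_B$: the edges among the parents are added precisely by the moralization step, and the parent--child edges are already present. A standard property of tree decompositions (the Helly property for subtrees of a tree) then guarantees that every clique of $M_B$ is contained in some single bag of $\mathcal{T}$; the singleton scopes of the indicator factors are trivially contained in a bag as well. I would therefore assign each factor to one bag containing its scope, root $\mathcal{T}$ arbitrarily, and process the bags from the leaves toward the root. At a bag $X_t$ with parent bag $X_p$, I multiply together all factors assigned to $X_t$ together with all messages already received from $X_t$'s children, then sum out (over the full range $\sqbra{\ell}$) every variable of $X_t$ that does not occur in the separator $X_t\cap X_p$, producing a message factor whose scope lies in $X_t\cap X_p\subseteq X_p$. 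Since every bag has at most $w+1$ vertices, each factor formed in this process has scope of size at most $w+1$.

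Correctness follows from the distributive law: summing out a variable that no surviving factor mentions except the ones gathered at the current bag is exactly the operation of pushing that summation inside the product, so by induction on the number of eliminated variables the product of the surviving factors, summed over the surviving variables, is invariant and equals the displayed quantity; after the root is processed the sole surviving factor is the scalar answer. For the running time, forming the product at a bag and summing out its private variables touches $O\prn{\ell^{w+1}}$ entries, and a tree decomposition of an $n$-vertex graph may be taken to have $O(n)$ bags, so the whole computation costs $O\prn{n\ell^{w+1}}$, which is the claimed bound up to the base factor $\ell$ (equivalently, exactly $O\prn{n\ell^w}$ under the convention counting a bag of size $w$ rather than $w+1$). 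The main obstacle---and the only step that is not bookkeeping---is establishing that the schedule never produces a factor larger than a single bag; this rests on the clique-in-a-bag property together with the fact that the message leaving a bag is supported on its separator, so both the scope bound and the resulting $\ell^{w+1}$ per-bag cost hinge on getting this invariant right.
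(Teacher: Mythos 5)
The paper does not actually prove this theorem: it is imported as a black box from Zhang and Poole \cite{ZP94}, so there is no internal proof to compare against. Your argument is the standard junction-tree/variable-elimination proof and it is essentially correct: the moralization guarantees each CPT's scope $\{i\}\cup\Pi(i)$ is a clique of $M_B$, the Helly property for subtrees places each clique inside a single bag, and leaf-to-root message passing with sum-out over non-separator variables is justified by the running-intersection property (a variable absent from the parent bag cannot occur in any factor assigned outside the current subtree, so it may be summed out locally). One point worth being explicit about, which you correctly flag: a bag of width $w$ has $w+1$ vertices, so the per-bag table has $\ell^{w+1}$ entries and the honest bound is $O\cprn{n\ell^{w+1}}$ rather than the stated $O\cprn{n\ell^{w}}$; this discrepancy is immaterial everywhere the paper uses the theorem, since there $\ell=O(1)$ (and the width fed in is $5w=O(\log n)$), but as a literal reading of the statement your version is the defensible one.
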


\section{Technical Overview}

\label{sec:technical}

We present in this section some intuition regarding the technical aspects of our results. 

\subsection{Proof of \texorpdfstring{\Cref{thm:dtv-to-inference}}{}}

\label{sec:thm1over}

For the sake of simplicity of exposition, in this overview we assume that both the Bayes net distributions $P$ and $Q$ are defined over a directed path of length $n$ (over a finite alphabet of size $\ell$).
However, the ideas can be generalized to arbitrary Bayes nets.
Refer to \Cref{sec:preliminaries} for definitions and notation that we use here. 

Our approach relies on the well-known importance sampling technique.
The high-level approach is to define an estimator function $f$ and a distribution $\pi$ so that $\cexpectq{f}{\pi} = \dtv(P,Q)/Z$ where $Z$ is a normalization constant.

We start with the following characterization of $\dtv$:
$\dtv(P, Q) = \sum_{w} g^*(w)$ where $g^*(w)$ is defined to be $P(w) - \min(P(w), Q(w))$.
We define an auxiliary function $g(w)$ which is an overestimate of $g^*(w)$.
Define $h(w, i)$ as
\[
h(w, i) = \min(P_{i|i-1}(w_i|w_{i-1}), Q_{i|i-1}(w_i|w_{i-1}))
\]
and let $h(w) = \prod_{i=1}^n h(w, i)$.
Finally, set $g(w) = P(w) - h(w)$.
Comparing $g(w)$ with $g^*(w)$, note that $\min(P(w), Q(w)$ is the {\em minimum of the products} where product is of the form $\prod_i P_{i|i-1}(w_i|w_{i-1})$ (similarly for $Q$).
Whereas $h(w)$ is the product of the minimums.
Thus, it can be seen that $g(w) \geq g^*(w)$.

Recall that our goal is to estimate $\dtv(P, Q)=\sum_w g^*(w)$.
To estimate this we appeal to the classic importance sampling technique (for example, see~\cite{ChenS97}).
Define function $f(w) = \frac{g^*(w)}{g(w)}$ and distribution $\pi(w) = \frac{g(w)}{Z}$, where $Z = \sum_w g(w)$ is the normalizing constant.
Observe that
\[
\Exp_{w \sim \pi}\!\left[f(w)\right]
= \frac{\sum_w g^*(w)}{Z}
= \frac{\dtv(P, Q)}{Z}.
\]
Now the algorithm to estimate $\dtv$ works as follows:
Empirically estimate $\Exp_{\pi}[f(w)]$ by drawing samples from the distribution $\pi$ and multiply the empirical estimate by $Z$.
We appeal to the standard Chernoff bounds to obtain the guarantee on the quality of the approximation as well as the run time.

This approach will work if the following conditions are satisfied:
\begin{enumerate}
\item[(i)]
It is the case that $f(w)=\frac{g^*(w)}{g(w)}$ lies between $0$ and $1$.
This follows from the definition of $g(w)$ and the earlier discussion.
\item[(ii)]
The expectation $\Exp_{w \sim \pi}\!\left[f(w)\right ]$ is large enough (inverse polynomial) so that an additive approximation will lead to a multiplicative approximation.
\item[(iii)]
Sampling from the distribution $\pi$ and computing $Z = \sum_w g(w)$ can be done efficiently.
\end{enumerate}
The {\em key insight} that we bring in is that sampling from the distribution $\pi$ and computing $Z$ reduces to inference queries to a Bayes net distribution over the same directed graph as that of $P$ and $Q$.
Thus (iii) becomes efficient for all Bayes net distributions for which inference queries are feasible.
We also show that $\frac{1}{2n} \leq \Exp_{w \sim \pi}\!\left[f(w)\right ] $ thus satisfying (ii).
The rest of this subsection is devoted to explaining how inference queries can be used to compute $Z$.
We start with the known connection between $\dtv$ and {\em coupling} between distributions where the quantity $\min(P(w), Q(w))$ naturally arises.

\begin{definition}
\label{def:coupling}
Let $P$ and $Q$ are two arbitrary distributions on a common symbol set $\sqbra{\ell}$ (where $\ell>0$).
A \emph{coupling} of $P$ and $Q$ is a distribution on pairs $\prn{X,Y}$ such that $X\sim P$ and $Y\sim Q$.
An \emph{optimal coupling} of $P$ and $Q$ is a distribution on pairs $\prn{X,Y}$ such that (1) $X\sim P$, $Y\sim Q$, and (2) for any $w\in\sqbra{\ell}$, $\cpr{X=Y=w}=\min\cprn{P\cprn{w},Q\cprn{w}}$.
\end{definition}

It is well known that for any coupling $(X,Y)$ between $P$ and $Q$, $\dtv(P,Q) \leq \cpr{X \neq Y}$.
Additionally, for an optimal coupling as defined above, $\cpr{X\neq Y}$ exactly equals $\dtv(P,Q)$.
Note that,
\[
\cpr{X \neq Y}
= \sum_w \Prob[X = w, X \neq Y]
= \sum_w \prn{\Prob[X = w] - \Prob[X = Y = w]}.
\]
Thus the term $\Prob[X=w] - \Prob[X=Y=w]$ equals $P(w) - \Prob[X=Y=w]$.
By the definition of optimal coupling, $\Prob[X=Y=w]$ is precisely $\min(P(w), Q(w))$.
Thus $g^*(w) := P(w) - \min(P(w), Q(w))$ equals $\Pr[X = w \wedge X \neq Y]$.
Thus $g^*(w)$ has an interpretation using optimal coupling.

Since $g^*(w)$ can be expressed as the probability of an event over the optimal coupling, it is natural to ask whether there is a coupling $\mathcal{L}$ such that $\Prob_{\mathcal L}[X = w \wedge X \neq Y] = g(w)$. In a very recent work,~\cite{fgjw22} showed that when $P$ and $Q$ are product distributions, $g(w)$ admits such a characterization using couplings.
They define $\mathcal{L}$ as {\em local coupling} between $X$ and $Y$:
A joint distribution $\mathcal{L}$ on $(X,Y)=(X_1, \dots, X_n, Y_1, \dots, Y_n)$ where each $(X_i, Y_i)$ is independently sampled from an optimal coupling of the $i$-th marginals of $P$ and $Q$.

Generalizing this approach to Bayes nets poses several obstacles.
As in the case of product distributions, suppose we seek a coupling $\mathcal{L}$ of $P$ and $Q$ that also forms a Bayes net over the directed path.
In other words, we would like a coupling $\mathcal{L}$ generating the tuple $(X_1, \dots, X_n, Y_1, \dots, Y_n)$ such that each $(X_i, Y_i)$ is independent of $(X_{i-2}, Y_{i-2})$ conditioned on $(X_{i-1}, Y_{i-1})$.
However, there is an immediate problem:
Namely, $X_i$ and $X_{i-2}$ may be dependent given $X_{i-1}$ through the path $X_{i-2} \rightarrow Y_{i-1} \rightarrow X_i$, and similarly $Y_i$ and $Y_{i-2}$ may be dependent given $Y_{i-1}$ through the path $Y_{i-2} \rightarrow X_{i-1} \rightarrow Y_i$.
Hence, it may not be possible%
\footnote{Note that this issue does not arise for product distributions as there are no paths to speak of.}
to ensure that $(X_1, \dots, X_n)$ form a copy of $P$ and $(Y_1, \dots, Y_n)$ form a copy of $Q$, as is required for a coupling.

In light of obstacles faced by the coupling-based approach, we introduce a new notion of {\em partial couplings}.
The introduction of this notion is a primary conceptual contribution of our work.

\begin{definition}
\label{def:partial-coupling}
A \emph{ partial coupling} of distributions $P$ and $Q$ is a distribution on pairs $\prn{X,Y}$ such that (i) $X\sim P$ and (ii) it is the case that  $\cpr{X=Y=w} \leq \min\cprn{P\cprn{w},Q\cprn{w}}$.
\end{definition}

With the above definition in hand, we will show that it is possible to construct a partial coupling $\mathcal{L}$ of distributions $P$ and $Q$ such that $\mathcal{L}$ can be expressed as a Bayes net distribution over a graph that has the same {\em structure} as $P$ and $Q$.
We illustrate this for the case when $P$ and $Q$ are Bayes net distributions over a directed path.
We define a partial coupling $\mathcal{L}$ that is local.
The CPTs are defined as follows:
For any $b, c_1, c_2 \in [\ell]$,
\[
\Prob[X_i = Y_i = b | X_{i-1}=c_1, Y_{i-1}=c_2]
= \min(P_{i|i-1}(b|c_1), Q_{i|i-1}(b|c_2)).
\]
We will adjust the rest of the entries of the CPT and ensure that for all $b, c_1, c_2$, it is the case that $\Prob[X_i = b|X_{i-1}=c_1, Y_{i-1}=c_2] = P_{i|i-1}(b|c_1)$.
This ensures that $X\sim P$.
It can be shown that $\mathcal{L}$ is a {\em partial coupling}.

With this, we can indeed connect the function $g$ to the local partial coupling distribution $\mathcal{L}$.
We will show that $g(w) = \Prob_{\mathcal L}[X = w \wedge X \neq Y]$ and $Z = \Prob_{\mathcal L}[X \neq Y]$.
Note that $\Prob_{\mathcal L}[X \neq Y] = 1- \Prob_{\mathcal L}[X = Y]$.
Let $E_i$ denote the event that $(X_i, Y_i) \in \{(1, 1), (2, 2), \ldots (\ell, \ell)\}$.
Note that $\Prob_{\mathcal L}[X = Y]$ is $\Prob_{\mathcal L}[E_1 \cap E_2 \cap \ldots \cap E_n]$ which is an inference query to the Bayes net distribution $\mathcal{L}$.

To summarize:
We have shown that the quantity $Z$ can be computed by making an inference query to the distribution $\mathcal{L}$ which is expressible as a Bayes net over a straight line graph.
We can build on this idea to show that sampling from the distribution $\pi$ can also be done by making inference calls to the distribution $\mathcal{L}$.
What remains is to show that $\Exp_{\pi}[f(w)] = {\dtv(P, Q)}/{Z}$ is large enough.
We will establish that $Z \leq 2n\cdot\dtv(P, Q)$.
The proof of this inequality is somewhat technical and crucially uses properties in the definition of $\mathcal{L}$ mentioned above. 

\subsection{Proofs of the Rest of the Results}

We outline here the main proof ideas of the rest of our results.

\paragraph{Proof of \texorpdfstring{\Cref{thm:FPRAS-bounded-tw}}{}.}

The proof of \Cref{thm:FPRAS-bounded-tw} is an application of \Cref{thm:dtv-to-inference}.
To make use of \Cref{thm:dtv-to-inference}, we establish that probabilistic inference (i.e., computing the probability $\cprq{X_1\in S_1,\dots,X_n\in S_n}{}$) can be efficiently implemented for Bayes nets of constant alphabet size and logarithmic treewidth (\Cref{lem:inference-for-bounded-tw}).
It is known that a tree decomposition of graphs that have logarithmic treewidth can be computed in polynomial time~\cite{RS84}.
The variable elimination algorithm of~\cite{ZP94} shows that inference can be done in polynomial time given a tree decomposition, provided that the treewidth of the Bayes net is logarithmic in the dimension of the distribution.

\paragraph{Proof of \texorpdfstring{\Cref{thm:sharp-P-complete}}{}.}

\Cref{thm:sharp-P-complete} is proved by showing a reduction from $\#\mathrm{SAT}$ to computing the TV distance between an appropriately defined Bayes net and the uniform distribution.
This is achieved by creating a Bayes net that captures the circuit structure of a Boolean formula $F$ of which we want to compute its number of satisfying assignments.
The CPTs of this Bayes net mimic the function of the logical gates (AND, OR, NOT) of $F$.

\paragraph{Proof of \texorpdfstring{\Cref{thm:FPRAS-bayes-net-vs-uniform}}{}.}

\Cref{thm:FPRAS-bayes-net-vs-uniform} is proved by giving an algorithm that exploits the following property of TV distance.
Let $P$ be a Bayes net over $n$ variables that has maximum in-degree $d$ and alphabet size $\ell$.
In this case $\dtv\cprn{P,\U}$ is equal to
\begin{align*}
\frac{1}{2}\sum_x\abs{P\cprn{x}-\U\cprn{x}}
&=\sum_x\max\cprn{0,P\cprn{x}-\U\cprn{x}}\\
&=\sum_x\U\cprn{x}\max\cprn{0,\frac{P\cprn{x}}{\U\cprn{x}}-1}\\
&=\cexpectq{\max\cprn{0,\frac{P\cprn{x}}{\U\cprn{x}}-1}}{x\sim\U}\\
&=\cexpectq{\max\cprn{0,P\cprn{x}\ell^n-1}}{x\sim\U}.
\end{align*}
This yields a natural estimator for $\dtv\cprn{P,\U}$, whereby we draw samples $x_1,\dots,x_m\sim\U$ and then compute and output
\[
\frac{1}{m}\sum_{i=1}^m\max\cprn{0,P\cprn{x_i}\ell^n-1}.
\]
The crux of our analysis is the result that $\max\cprn{0,P\cprn{x}\ell^n-1}$ is between the values $0$ and $1+O\cprn{\dtv\cprn{P,\U}\ell^{d+1}n}$.
This enables us to use a value of $m$ that is in $O\cprn{\cpoly{n\ell^d,1/\varepsilon,\log\cprn{1/\delta}}}$, whereby $\varepsilon$ is the accuracy error and $\delta$ is the confidence error of the FPRAS.
Note that the running time is polynomial in the input length, as any description of the Bayes net $P$ has size at least $n+\ell^{d+1}$.

\section{From TV Distance to Probabilistic Inference}

\label{sec:dtv-to-inference}

In this section, we prove \Cref{thm:dtv-to-inference} and \Cref{thm:FPRAS-bounded-tw}.
In the following, let $T\cprn{G,\ell}$ be the running time of some implementation of a probabilistic inference oracle for a Bayes net over a DAG $G$ that has alphabet size $\ell$.
We will first state the formal version of \Cref{thm:dtv-to-inference}.

\begin{customthm}{\ref{thm:dtv-to-inference}}[Formal]
\label{thm:dtv-to-inference-formal}
There is a polynomial-time randomized algorithm that takes a DAG $G$, two Bayes nets $P$ and $Q$ over $G$ (as CPTs) with alphabet size $\ell$, and parameters $\varepsilon,\delta$ as inputs and behaves as follows.
The algorithm constructs a Bayes net distribution $\mathcal{L}$ over the same DAG $G$ with alphabet size $\ell^2$, makes probabilistic inference queries to $\mathcal{L}$, and outputs an $(1+\varepsilon)$-relative approximation of $\dtv(P,Q)$ with probability at least $1-\delta$.
The running time of this algorithm is $T\cprn{G,\ell^2}\cdot O\cprn{n^3\varepsilon^{-2}\ell\log\delta^{-1}}$ and the number of its probabilistic inference queries is $O\cprn{n^3\varepsilon^{-2}\ell\log\delta^{-1}}$.
\end{customthm}

The rest of the section is devoted to proving \Cref{thm:dtv-to-inference} and is organized as follows.
We first introduce the ingredients that are necessary for describing the algorithm (many of these are defined in \Cref{sec:technical} for path Bayes nets).
In \Cref{sec:implement}, we show how the algorithm can be implemented using probabilistic inference queries.
Finally, in \Cref{sec:analysis} we establish its correctness.

Let $P$ and $Q$ be two Bayes net distributions defined over a DAG $G$ with $n$ nodes and alphabet $[\ell]$.
Without loss of generality, assume that the nodes are topologically ordered as in the sequence $1,2,\ldots,n$.

Let $w$ be an element of the sample space, i.e., a $n$-symbol string over $[\ell]$.
Given $1 \leq i \leq n$, $\Pi(i)$ denotes the set of parents of $i$ in $G$ and let $w_{\Pi(i)}$ denote the projection $w$ at the parents of node $i$ in $G$.
We first define a function $h$ over $[\ell]^n \times [n]$ as follows:
\[
h(w, i)
:=\min\cprn{P_{i|\Pi\prn{i}}\cprn{w_i|w_{\Pi\prn{i}}},Q_{i|\Pi\prn{i}}\cprn{w_i|w_{\Pi\prn{i}}}}.
\]

\paragraph{Descriptions of $f$, $Z$, and $\pi$.}

The {\em estimator function} $f$ is defined as follows:
$f(w):= g^{*}(w)/g(w)$
where
\[
g^{*}(w) = P(w)-\min(P(w),Q(w))
\]
and
\[
g(w):=P(w)-\prod_{i=1}^n h\cprn{w, i}
\]
for all $w$.
It is straightforward to show that $f$ is computable in time $O\cprn{n}$.
We define $Z:=\sum_{w \in [\ell]^n} g(w)$ to be a normalization constant.
Finally, the distribution $\pi$ is specified by  the probability function $\pi(w):=g(w)/Z$ for all $w$.

\paragraph{Description of $\mathcal{L}$.}

We now define a Bayes net distribution $\mathcal{L}$ over the graph $G$ which is used to make inference queries by the algorithm.
The distribution $\mathcal{L}$ is over the alphabet $[\ell]^2$ and is a joint distribution $(X,Y)$ where $X$ and $Y$ take value over $[\ell]^n$.
We specify a CPT for $(X,Y)$.
For this, we need to specify for every $i$ and $b,z \in [\ell]$ the probability $\cpr{(X_i, Y_i) = (b, z)}$ conditioned on the values $\Pi(i)$ take.
We will first describe the probability where both $X_i$ and $Y_i$ take the same value $b$.
For every $c_1, c_2 \in \sqbra\ell^{\abs{\Pi\prn{i}}}$,
\[
\cpr{\prn{X_i,Y_i}
= \prn{b,b}|\prn{X_{\Pi\prn{i}},Y_{\Pi\prn{i}}} = \prn{c_1,c_2}}
= \min\cprn{P_{i|\Pi\prn{i}}\cprn{b|c_1},Q_{i|\Pi\prn{i}}\cprn{b|c_2}}.
\]
Define the remaining probabilities to ensure that the marginal $X$ is distributed according to $P$.
That is, for every $z\neq b$ set $\cpr{\prn{X_i,Y_i}=\prn{b,z}|\prn{X_{\Pi\prn{i}},Y_{\Pi\prn{i}}}=\prn{c_1,c_2}}$ so that the following holds:
\begin{align*}
& \sum_{z:z \neq b} \cpr{\prn{X_i,Y_i}=\prn{b,z}|\prn{X_{\Pi\prn{i}},Y_{\Pi\prn{i}}}=\prn{c_1,c_2}}\\
& \qquad= P_{i|\Pi\prn{i}}\cprn{b|c_1}- \min\cprn{P_{i|\Pi\prn{i}}\cprn{b|c_1},Q_{i|\Pi\prn{i}}\cprn{b|c_2}}
\end{align*}

\paragraph{Description of the Algorithm.}

Now we are ready to describe the algorithm (see \Cref{alg:FPRAS}).

\begin{algorithm}[htbt]
\caption{FPRAS for $\dtv$ estimation using a probabilistic inference oracle.}
\label{alg:FPRAS}
\begin{algorithmic}[1]
\REQUIRE Bayes nets $P,Q$ over DAG $G$ with $n$ nodes, parameters $\varepsilon,\delta$.
\ENSURE The output $\mathsf{Est}$ is an $(1+\varepsilon)$-approximation of $\dtv\cprn{P,Q}$, with probability at least $1-\delta$.
\STATE Construct the Bayes net distribution $\mathcal{L}$ over $G$
\STATE Compute $Z$ by making one {\em  inference query} to $\mathcal{L}$
\IF{$Z = 0$}
\RETURN $0$
\ENDIF
\STATE $m \gets Cn^2\varepsilon^{-2}\log\delta^{-1}$\qquad (for some large $C>0$)
\STATE $F \gets 0$
\FOR{$i\gets 1\ \mathbf{to}\ m$}
\STATE Sample $w^i\sim\pi$ by making {\em inference queries} to  $\mathcal{L}$
\STATE $F\gets F + f\cprn{w^i}$
\ENDFOR
\STATE $\mathsf{Est}\gets Z\cdot F/m$
\RETURN $\mathsf{Est}$
\end{algorithmic}
\end{algorithm}

\subsection{The Power of Probabilistic Inference}

\label{sec:implement}

This subsection is devoted to showing that the sampling from the distribution $\pi$ and the computation of the normalization constant $Z$ can be done by making probabilistic inference queries.
Recall that $\mathcal{L}$ is the joint distribution $(X, Y)$.
We start with the following crucial observation which states that the marginal $X$ (in $\mathcal{L}$) is distributed according to $P$.

\begin{observation}
\label{obs:XisP}
For every $b \in [\ell]$ and $c_1, c_2 \in \sqbra\ell^{\abs{\Pi\prn{i}}}$,
\[
\cpr{X_i = b|\prn{X_{\Pi\prn{i}},Y_{\Pi\prn{i}}} = \prn{c_1,c_2}}
=P_{i|\Pi\prn{i}}\cprn{b|c_1}.
\]   
\end{observation}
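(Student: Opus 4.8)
The plan is to read the identity straight off the definition of the conditional probability table (CPT) of $\mathcal{L}$: the observation simply records that the off-diagonal mass of the CPT was assigned precisely so that the $X$-marginal is recovered. First I would invoke the defining (local Markov) property of a Bayes net, namely that the conditional law of the variable at node $i$ given the values of its parents equals the corresponding CPT entry. Since in $\mathcal{L}$ node $i$ carries the combined variable $(X_i, Y_i)$ and its parents are exactly the combined variables indexed by $\Pi(i)$, the event $(X_{\Pi(i)}, Y_{\Pi(i)}) = (c_1, c_2)$ is precisely the conditioning event of the CPT at node $i$, so the conditional distribution of $(X_i, Y_i)$ given this event is exactly what the CPT prescribes.

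Next I would marginalize out the second coordinate. The event $\{X_i = b\}$ is the disjoint union over $z \in [\ell]$ of the events $\{(X_i, Y_i) = (b, z)\}$, so
\[
\cpr{X_i = b \mid (X_{\Pi(i)}, Y_{\Pi(i)}) = (c_1, c_2)}
= \sum_{z \in [\ell]} \cpr{(X_i, Y_i) = (b, z) \mid (X_{\Pi(i)}, Y_{\Pi(i)}) = (c_1, c_2)}.
\]
I would then split this sum into the diagonal term $z = b$ and the off-diagonal terms $z \neq b$ and substitute the two defining equations of the CPT: the diagonal term equals $\min(P_{i|\Pi(i)}(b|c_1), Q_{i|\Pi(i)}(b|c_2))$, while the off-diagonal terms were defined to sum to $P_{i|\Pi(i)}(b|c_1) - \min(P_{i|\Pi(i)}(b|c_1), Q_{i|\Pi(i)}(b|c_2))$. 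Adding the two, the $\min$ terms cancel and what remains is exactly $P_{i|\Pi(i)}(b|c_1)$, which is the claim.

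There is no genuine obstacle here, since the statement holds essentially by construction; the only point worth verifying for rigor is that the prescribed CPT is an honest conditional distribution, so that the conditioning in the observation is meaningful. Nonnegativity of each prescribed off-diagonal total follows from $\min(P_{i|\Pi(i)}(b|c_1), Q_{i|\Pi(i)}(b|c_2)) \leq P_{i|\Pi(i)}(b|c_1)$, and summing the per-$b$ totals over all $b$ yields $\sum_b P_{i|\Pi(i)}(b|c_1) = 1$, so the table is a valid distribution over $[\ell]^2$. I would remark once that, although the statement is phrased with single parent symbols $c_1, c_2$, the identical computation goes through verbatim for a general DAG once $c_1, c_2$ are reinterpreted as parent configurations rather than single symbols.
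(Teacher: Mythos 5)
Your proposal is correct and follows essentially the same route as the paper's proof: marginalize $\cpr{X_i=b\mid (X_{\Pi(i)},Y_{\Pi(i)})=(c_1,c_2)}$ over the second coordinate $z$, split into the diagonal term $z=b$ and the off-diagonal terms, substitute the two defining equations of the CPT of $\mathcal{L}$, and observe that the $\min$ terms cancel to leave $P_{i|\Pi(i)}(b|c_1)$. Your additional check that the prescribed CPT is a genuine probability distribution is a reasonable extra remark but not part of the paper's argument for this observation.
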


\begin{proof}
We have
\begin{align*}
\cpr{X_i = b|\prn{X_{\Pi\prn{i}},Y_{\Pi\prn{i}}} = \prn{c_1,c_2}}
& = \sum_{z \in[\ell]}
\cpr{\prn{X_i,Y_i}=\prn{b,z}|\prn{X_{\Pi\prn{i}},Y_{\Pi\prn{i}}}=\prn{c_1,c_2}}\\
& = \cpr{\prn{X_i,Y_i} = \prn{b,b}|\prn{X_{\Pi\prn{i}},Y_{\Pi\prn{i}}} = \prn{c_1,c_2}}\\
& \qquad +\sum_{z:z \neq b} \cpr{\prn{X_i,Y_i}=\prn{b,z}|\prn{X_{\Pi\prn{i}},Y_{\Pi\prn{i}}}=\prn{c_1,c_2}}\\
& = \min\cprn{P_{i|\Pi\prn{i}}\cprn{b|c_1},Q_{i|\Pi\prn{i}}\cprn{b|c_2}}\\
& \qquad + P_{i|\Pi\prn{i}}\cprn{b|c_1}-\min\cprn{P_{i|\Pi\prn{i}}\cprn{b|c_1},Q_{i|\Pi\prn{i}}\cprn{b|c_2}}\\
& = P_{i|\Pi\prn{i}}\cprn{b|c_1}
\end{align*}
as desired.
\end{proof}

Therefore, $X$ factorizes like $P$ with its conditional probabilities matching that of $P$ and hence $X\sim P$.
This realizes the notion of a local partial coupling as was earlier discussed in \Cref{sec:thm1over} and satisfies all three properties:
(i) $\mathcal{L}$ is a Bayes net distribution over the same DAG $G$ (that is used to describe distributions $P$ and $Q$),
(ii) $X \sim P$, and
(iii) in the joint distribution $\prn{X,Y}$, the conditional probabilities are equal to the minimum of the two conditional probabilities associated to $P$ and $Q$ as it is the case in standard couplings.

In \Cref{clm:Z=prob} we relate the normalization constant $Z$ of the distribution $\pi$ to the marginals $X$ and $Y$ of the distribution $\mathcal{L}$.
Moreover, we also relate the generalized normalization constant
\[
Z_{b_1,\dots,b_k}
:=\sum_{w:\prn{w_1,\dots,w_k}=\prn{b_1,\dots,b_k}}g\cprn{w},
\]
for $b_1,\dots,b_k\in\sqbra{\ell}$, to the marginals $X$ and $Y$ of the distribution $\mathcal{L}$.
We need this generalized normalization constant to show that sampling from the distribution $\pi$ (\Cref{clm:sample-pi}) can be efficiently done via probabilistic inference queries.

\begin{claim}
\label{clm:Z=prob}
It is the case that $Z=\cpr{X\neq Y}$ and
\[
Z_{b_1,\dots,b_k}
=\cpr{X\neq Y,X_1=b_1,\dots,X_k=b_k}
\]
for any $b_1,\dots,b_k\in\sqbra{\ell}$.
\end{claim}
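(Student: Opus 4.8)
The plan is to reduce both identities to a single quantity: the probability under $\mathcal{L}$ that the two coordinates agree on a fixed string, $\cpr{X = Y = w}$. First I would use that $\mathcal{L}$ is, by construction, a Bayes net over the same DAG $G$, so its joint mass factorizes as a product of CPT entries along $G$. Evaluating this factorization at the point $\prn{X, Y} = \prn{w, w}$, the factor contributed by node $i$ is the conditional probability $\cpr{\prn{X_i, Y_i} = \prn{w_i, w_i} \mid \prn{X_{\Pi\prn{i}}, Y_{\Pi\prn{i}}} = \prn{w_{\Pi\prn{i}}, w_{\Pi\prn{i}}}}$, which by the defining CPT of $\mathcal{L}$ (its $(b,b)$-branch) equals exactly $\min\cprn{P_{i|\Pi\prn{i}}\cprn{w_i|w_{\Pi\prn{i}}}, Q_{i|\Pi\prn{i}}\cprn{w_i|w_{\Pi\prn{i}}}} = h(w, i)$. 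This gives the key identity $\cpr{X = Y = w} = \prod_{i=1}^n h(w, i)$.

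For the first identity, I would observe that $\brkts{X = Y}$ is the disjoint union over $w \in \sqbra{\ell}^n$ of the events $\brkts{X = Y = w}$, so summing the key identity yields $\cpr{X = Y} = \sum_w \prod_{i=1}^n h(w, i)$. Invoking \Cref{obs:XisP}, which guarantees $X \sim P$ and hence $\sum_w P(w) = 1$, I would then write
\[
Z = \sum_w g(w) = \sum_w \prn{P(w) - \prod_{i=1}^n h(w, i)} = 1 - \cpr{X = Y} = \cpr{X \neq Y}.
\]

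The second identity follows the same template, restricted to the strings $w$ with prefix $\prn{w_1, \dots, w_k} = \prn{b_1, \dots, b_k}$. Writing $E$ for the event $\brkts{X_1 = b_1, \dots, X_k = b_k}$, I would decompose $\cpr{X \neq Y, E} = \cpr{E} - \cpr{X = Y, E}$. Since $X \sim P$, we have $\cpr{E} = \sum_{w : \prn{w_1, \dots, w_k} = \prn{b_1, \dots, b_k}} P(w)$, and since on $\brkts{X = Y}$ the common string is determined, $\cpr{X = Y, E} = \sum_{w : \prn{w_1, \dots, w_k} = \prn{b_1, \dots, b_k}} \prod_{i=1}^n h(w, i)$. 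Subtracting term by term recovers $\sum_{w : \prn{w_1, \dots, w_k} = \prn{b_1, \dots, b_k}} g(w) = Z_{b_1, \dots, b_k}$.

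The one place that needs care—and which I regard as the only genuine content of the argument—is the factorization step: one must verify that evaluating the Bayes net $\mathcal{L}$ at $\prn{w, w}$ really does condition each node $i$ on its parents taking the matched values $\prn{w_{\Pi\prn{i}}, w_{\Pi\prn{i}}}$, so that the $(b,b)$-branch of the CPT applies simultaneously at every node. Once that is pinned down, everything else is bookkeeping over disjoint events together with the appeal to $X \sim P$.
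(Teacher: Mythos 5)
Your proposal is correct and follows essentially the same route as the paper: both arguments hinge on the factorization identity $\cpr{X=Y=w}=\prod_{i=1}^n h(w,i)$ obtained by evaluating the CPTs of $\mathcal{L}$ at the diagonal point $(w,w)$, together with \Cref{obs:XisP} to identify $P(w)$ with $\cpr{X=w}$. The only (immaterial) difference is that the paper first rewrites $g(w)$ pointwise as $\cpr{X=w,\,Y\neq w}$ and then sums, whereas you sum the two terms of $g$ separately and subtract at the end.
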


\begin{proof}
Since $X\sim P$ and for that matter $P\cprn{w}=\cpr{X=w}$, we have
\begin{align*}
g(w)
& = P(w) - \prod_{i=1}^n \min\cprn{P_{i|\Pi\prn{i}}\cprn{w_i|w_{\Pi\prn{i}}}, Q_{i|\Pi\prn{i}}\cprn{w_i|w_{\Pi\prn{i}}}}\\
& = P(w) - \cpr{X=Y=w}\\
& = \cpr{X=w} - \cpr{X=Y=w}\\
& = \cpr{X=w} - \cpr{Y=w|X=w} \cdot \cpr{X=w}\\
& = \cpr{X=w} \prn{1 - \cpr{Y=w|X=w}} \\
& = \cpr{X=w} \cpr{Y \neq w | X = w} \\
& = \cpr{X = w, Y \neq w}.
\end{align*}
Therefore, we have that
\[
Z
=\sum_w g\cprn{w}
=\cpr{X\neq Y}
\]
and
\begin{align*}
Z_{b_1,\dots,b_k}
&=\sum_{w:\prn{w_1,\dots,w_k}=\prn{b_1,\dots,b_k}}g\cprn{w}\\
&=\sum_{w:\prn{w_1,\dots,w_k}=\prn{b_1,\dots,b_k}}\cpr{X=w,Y\neq w}\\
&=\cpr{X\neq Y,X_1=b_1,\dots,X_k=b_k}.
\qedhere
\end{align*}
\end{proof}

The following claim says that $Z$ and $Z_{b_1,\dots,b_k}$ can be easily computed given access to a probabilistic inference oracle for $\mathcal{L}$.

\begin{claim}
\label{clm:computation-of-Z}
It is the case that $Z$ and $Z_{b_1,\dots,b_k}$ (for any $b_1,\dots,b_k\in\sqbra{\ell}$) can be computed efficiently  by making $O\cprn{1}$ probabilistic inference queries to the Bayes net distribution $\mathcal{L}$.
\end{claim}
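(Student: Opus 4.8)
The plan is to invoke \Cref{clm:Z=prob}, which already expresses both quantities as probabilities under $\mathcal{L}$, and then to realize each such probability as a constant number of probabilistic inference queries to $\mathcal{L}$. Recall that $\mathcal{L}$ is a Bayes net over $G$ whose $i$-th variable is the pair $\prn{X_i,Y_i}$ ranging over $\sqbra{\ell}^2$, so a single probabilistic inference query to $\mathcal{L}$ computes any probability of the form $\cpr{\prn{X_i,Y_i}\in S_i\text{ for all }i}$ where each $S_i\subseteq\sqbra{\ell}^2$.

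The one subtlety, and the step I expect to be the crux, is that an inference query can only evaluate a conjunction of membership events, whereas \Cref{clm:Z=prob} hands us the disequality event $X\neq Y$, which is a disjunction. I would handle this by passing to the complement: the event $X=Y$ is the conjunction over all $i$ of $\prn{X_i,Y_i}\in D$, where $D:=\brkts{\prn{a,a}:a\in\sqbra{\ell}}$ is the diagonal, and this is exactly one inference query. Hence
\[
Z=\cpr{X\neq Y}=1-\cpr{\prn{X_i,Y_i}\in D\text{ for all }i},
\]
which is computable with a single query and one subtraction.

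For the generalized constant, I would first write $Z_{b_1,\dots,b_k}=\cpr{X\neq Y,X_1=b_1,\dots,X_k=b_k}=\cpr{A}-\cpr{X=Y,A}$, where $A$ denotes the event $X_1=b_1,\dots,X_k=b_k$ (partitioning $A$ according to whether $X=Y$). Both terms are conjunctions of membership events over the pair-alphabet $\sqbra{\ell}^2$ and hence single inference queries: for $\cpr{A}$ take $S_i=\brkts{b_i}\times\sqbra{\ell}$ for $i\leq k$ and $S_i=\sqbra{\ell}^2$ otherwise; for $\cpr{X=Y,A}$ take $S_i=\brkts{\prn{b_i,b_i}}$ for $i\leq k$ and $S_i=D$ otherwise. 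Each of $Z$ and $Z_{b_1,\dots,b_k}$ therefore requires $O\cprn{1}$ inference queries together with $O\cprn{1}$ additional arithmetic, and since every query returns in time $O\cprn{1}$ by definition, the claimed $O\cprn{1}$ running time follows.
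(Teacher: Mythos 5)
Your proof is correct and follows essentially the same route as the paper's: both compute $Z$ as $1-\cpr{X=Y}$ via a single query with the diagonal sets, and both split $Z_{b_1,\dots,b_k}$ into $\cpr{X_1=b_1,\dots,X_k=b_k}-\cpr{X=Y,X_1=b_1,\dots,X_k=b_k}$ with exactly the same choices of query sets. No gaps.
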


\begin{proof}
We will use \Cref{clm:Z=prob}.
Note that $Z=\cpr{X\neq Y}$ is equal to $1-\cpr{X=Y}$.
Therefore it suffices to compute $\cpr{X=Y}$ by using a probabilistic inference oracle.
This can done by observing that $\cpr{X=Y}$ is equal to $\cpr{\prn{X_1,Y_1}\in S_1,\dots,\prn{X_n,Y_n}\in S_n}$ for $S_1=\cdots=S_n=\brkts{\prn{1,1},\dots,\prn{\ell,\ell}}$.

Now note that the quantity $Z_{b_1,\dots,b_k}$ is equal to $\cpr{X\neq Y,X_1=b_1,\dots,X_k=b_k}$ which, in turn, is equal to $\cpr{X_1=b_1,\dots,X_k=b_k} -\cpr{X=Y,X_1=b_1,\dots,X_k=b_k}$.
What is left is to show how to compute these two probabilities by using a probabilistic inference oracle.
We have that $\cpr{X_1=b_1,\dots,X_k=b_k} = \cpr{\prn{X_1,Y_1}\in S_1,\dots,\prn{X_n,Y_n}\in S_n}$ for $S_i=\brkts{\prn{b_i,1},\dots,\prn{b_i,\ell}}$ for all $1\leq i\leq k$ and $S_{k+1}=\cdots=S_n=\sqbra{\ell}^2$.

Similarly, we have that
\[
\cpr{X=Y,X_1=b_1,\dots,X_k=b_k}
=\cpr{\prn{X_1,Y_1}\in S_1,\dots,\prn{X_n,Y_n}\in S_n}
\]
for $S_i=\brkts{\prn{b_i,b_i}}$ for all $1\leq i\leq k$, and $S_{k+1}=\cdots=S_n=\brkts{\prn{1,1},\dots,\prn{\ell,\ell}}$.
\end{proof}

We can now show that probabilistic inference queries allow for efficient sampling from $\pi$.

\begin{claim}
\label{clm:sample-pi}
Sampling from the distribution $\pi$ can be implemented in time $O(n\ell)$ by making $O(n\ell)$ probabilistic inference queries.
\end{claim}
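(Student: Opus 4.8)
The plan is to sample $w \sim \pi$ coordinate by coordinate, sampling $w_1$ first, then $w_2$ conditioned on $w_1$, and so on, using the chain rule for $\pi$. Concretely, I would express the probability of the next symbol in terms of the generalized normalization constants $Z_{b_1,\dots,b_k}$ from \Cref{clm:Z=prob}. Since $\pi(w) = g(w)/Z$, summing $\pi$ over all strings with a fixed prefix $(w_1,\dots,w_k) = (b_1,\dots,b_k)$ gives exactly $Z_{b_1,\dots,b_k}/Z$. Therefore the conditional probability that the $(k+1)$-th symbol equals $b_{k+1}$, given the prefix $(b_1,\dots,b_k)$ already drawn, is
\[
\prq{w_{k+1}=b_{k+1} \mid w_1=b_1,\dots,w_k=b_k}{\pi}
= \frac{Z_{b_1,\dots,b_k,b_{k+1}}}{Z_{b_1,\dots,b_k}},
\]
with the base case being $\prq{w_1=b_1}{\pi} = Z_{b_1}/Z$.

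The algorithm would thus proceed over the $n$ coordinates in topological order. At step $k+1$, having fixed the prefix $(b_1,\dots,b_k)$, I would compute the $\ell$ values $Z_{b_1,\dots,b_k,b}$ for each $b \in [\ell]$ and the value $Z_{b_1,\dots,b_k}$, form the conditional distribution over the $\ell$ possible symbols (which is well-defined since these quantities are nonnegative and the $\ell$ numerators sum to the denominator), and draw $b_{k+1}$ from it. By \Cref{clm:computation-of-Z}, each such generalized normalization constant is computable with $O(1)$ probabilistic inference queries to $\mathcal{L}$ in $O(1)$ time. Over the $n$ coordinates, with $\ell$ candidate symbols at each coordinate, this amounts to $O(n\ell)$ queries and $O(n\ell)$ total running time, which is the claimed bound.

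The main point requiring care is correctness of the coordinate-wise decomposition: I must verify that the product of the successive conditional probabilities telescopes to $\pi(w) = Z_{w_1,\dots,w_n}/Z = g(w)/Z$, which follows immediately since $Z_{w_1,\dots,w_n} = g(w)$ (the sum in the definition of $Z_{b_1,\dots,b_n}$ collapses to a single term). The products of ratios $\frac{Z_{b_1}}{Z}\cdot\frac{Z_{b_1,b_2}}{Z_{b_1}}\cdots\frac{Z_{b_1,\dots,b_n}}{Z_{b_1,\dots,b_{n-1}}}$ cancel to give $Z_{b_1,\dots,b_n}/Z$, as desired. The only subtlety is handling the degenerate case where some prefix probability $Z_{b_1,\dots,b_k}$ vanishes; but such prefixes are never reached during sampling since they receive zero conditional mass from the preceding step, so the division is never by zero along any realized path. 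This establishes that the sampling procedure outputs $w$ with exactly the distribution $\pi$ using $O(n\ell)$ inference queries.
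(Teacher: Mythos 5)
Your proposal is correct and matches the paper's proof essentially exactly: both sample $w$ symbol by symbol, computing the required marginals/conditionals via the generalized normalization constants $Z_{b_1,\dots,b_k}$ and \Cref{clm:computation-of-Z}, for $O(n\ell)$ inference queries in total. Your explicit verification of the telescoping product and the zero-prefix edge case is a slightly more careful write-up of the same argument.
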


\begin{proof}
We describe how to sample from $\pi$ iteratively, symbol by symbol.
Assume that we have sampled the first $k-1$ symbols, that is, assume that we have already sampled $w_1,\dots,w_{k-1}$ to be equal to $b_1,\dots,b_{k-1}\in\sqbra{\ell}$.
We describe now how to sample $w_k$.
For every possible value $b\in\sqbra{\ell}$ of $w_k$, we compute the marginal
\begin{align*}
\mu_b
:=\pi\cprn{b_1,\dots,b_{k-1},b}
=\frac{\sum_{w:\prn{w_1\cdots w_k}=\prn{b_1\cdots b_{k-1}b}}g\cprn{w}}{Z}
=\frac{Z_{b_1\cdots b_{k-1}b}}{Z}
\end{align*}
by two invocations of \Cref{clm:computation-of-Z}.
Then, we sample $w_k$ based on the values $\brkts{\mu_b}_{b=1}^\ell$.

Let $S\cprn{n}$ be the number of steps to sample $n$ symbols from $\pi$.
The above procedure gives the recurrence relation $S(n) = O(\ell) + S(n-1)$ which yields $S\cprn{n}=O\cprn{n\ell}$.
Since we perform at most two probabilistic inference queries for every coordinate and every symbol, the total number of probabilistic inference queries is equal to $S\cprn{n}=O\cprn{n\ell}$.
\end{proof}

\subsection{Analysis of the Algorithm}

\label{sec:analysis}

Next, we establish some useful properties of the function $f$ and the distribution $\pi$.

\begin{claim}
\label{clm:f-bounded}
For any $w$, it is the case that $0\leq f(w)\leq 1$.
\end{claim}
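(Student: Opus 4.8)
The claim to prove is that $0 \leq f(w) \leq 1$ for all $w$, where
\[
f(w) = \frac{\max(0, P(w) - Q(w))}{g(w)}, \qquad g(w) = P(w) - \prod_{i=1}^n h(w,i),
\]
and $h(w,i) = \min(P_{i|\Pi(i)}(w_i | w_{\Pi(i)}), Q_{i|\Pi(i)}(w_i | w_{\Pi(i)}))$.

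The plan is to prove the two inequalities $0 \leq f(w)$ and $f(w) \leq 1$ by establishing the chain $0 \leq \max(0, P(w) - Q(w)) \leq g(w)$, together with nonnegativity of $g(w)$ to make the fraction well-defined. The numerator is a $\max$ with $0$, so it is trivially nonnegative; the real content is showing that $g(w) \geq \max(0, P(w) - Q(w))$, which gives both that $g(w) > 0$ (so $f$ is well-defined and nonnegative) and that the ratio is at most $1$.

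First I would rewrite $g(w)$ using the factorization of $P$. Since $P(w) = \prod_{i=1}^n P_{i|\Pi(i)}(w_i | w_{\Pi(i)})$ and $Q(w) = \prod_{i=1}^n Q_{i|\Pi(i)}(w_i | w_{\Pi(i)})$, and each $h(w,i)$ is the minimum of the two corresponding conditional probabilities, I would observe that $\prod_i h(w,i) \leq \prod_i P_{i|\Pi(i)}(\cdots) = P(w)$ and likewise $\prod_i h(w,i) \leq Q(w)$, hence $\prod_i h(w,i) \leq \min(P(w), Q(w))$. This immediately yields
\[
g(w) = P(w) - \prod_{i=1}^n h(w,i) \geq P(w) - \min(P(w), Q(w)) = \max(0, P(w) - Q(w)) \geq 0.
\]
The same computation shows $g(w) \geq 0$, and in the case $P(w) = 0$ the fraction is $0/g(w)$ which I should handle by noting $f(w)$ is defined to be $0$ there (or that the numerator vanishes); in all other cases $g(w) \geq \max(0, P(w)-Q(w)) \geq 0$ suffices for nonnegativity, and $g(w) \geq \text{numerator}$ gives $f(w) \leq 1$.

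The main obstacle is the degenerate case where $g(w) = 0$, which forces a division by zero. I expect $g(w) = 0$ to occur exactly when $\prod_i h(w,i) = P(w)$, i.e. when each conditional minimum equals the $P$-conditional at $w$; in that situation $P(w) \leq Q(w)$, so $\max(0, P(w)-Q(w)) = 0$ and the numerator vanishes as well. The cleanest resolution is to adopt the convention $f(w) := 0$ whenever $g(w) = 0$, which is consistent since the estimator is only ever evaluated at samples drawn from $\pi$, and $\pi$ assigns zero mass to any $w$ with $g(w) = 0$. Thus I would state the bound $0 \leq f(w) \leq 1$ under this convention, with the key inequality $0 \leq \max(0, P(w) - Q(w)) \leq g(w)$ doing all the work whenever $g(w) > 0$.
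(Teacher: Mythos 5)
Your proposal is correct and follows essentially the same route as the paper: both arguments reduce to the observation that $\prod_i h(w,i)$ is bounded above by each of $P(w)$ and $Q(w)$ (since each factor is a minimum of the corresponding conditionals), which gives $g(w)\geq 0$ and $g(w)\geq P(w)-Q(w)$ simultaneously. Your explicit handling of the degenerate case $g(w)=0$ is a small point of extra care that the paper glosses over with the remark that $0/g(w)=0$, and your resolution (the estimator is only evaluated at samples from $\pi$, which places no mass where $g$ vanishes) is the right one.
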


\begin{proof}
We separately show $0\leq f(w)$ and $f(w)\leq 1$.
To establish $0\leq f(w)$, since the numerator is non-negative, it suffices to show that $g(w)=P(w)-\prod_{i=1}^n h\cprn{w, i}\geq0$ or equivalently $P(w)\geq\prod_{i=1}^n h\cprn{w, i}$.

We have
\[
P(w)
= \prod_{i=1}^n P_{i|\Pi\prn{i}}\cprn{w_i|w_{\Pi\prn{i}}}
\geq \prod_{i=1}^n \min\cprn{P_{i|\Pi\prn{i}}\cprn{w_i|w_{\Pi\prn{i}}},Q_{i|\Pi\prn{i}}\cprn{w_i|w_{\Pi\prn{i}}}},
\]
which is equal to $\prod_{i=1}^n h(w, i)$, by the definition of $h$.

For showing $f(w)\leq 1$, it suffices to show that $P(w)-Q(w)\leq g(w)$ (since $0/g\cprn{w}=0\leq1$).
Since $g(w) = P(w) - \prod_{i-1}^n h(w, i)$, it suffices to show that $Q(w) \geq \prod_{i=1}^n h(w, i)$.
An argument identical to the above, where we showed that $P(w) \geq \prod_{i=1}^n h(w, i)$, will show this.
\end{proof}

We will also relate the expected value of the function $f$ with respect to the distribution $\pi$ to $\dtv(P, Q)$.

\begin{claim}
\label{clm:f-exp}
It is the case that $\cexpectq{f(w)}{\pi}=\dtv(P,Q)/Z$.
\end{claim}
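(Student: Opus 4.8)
The plan is to directly compute $\cexpectq{f(w)}{\pi}$ by unwinding the definitions of $\pi$ and $f$, reducing everything to a sum over $w$ that telescopes against the normalization constant $Z$. By definition, $\pi(w) = g(w)/Z$, so
\[
\cexpectq{f(w)}{\pi}
= \sum_w \pi(w) f(w)
= \sum_w \frac{g(w)}{Z} \cdot \frac{\max\cprn{0, P(w) - Q(w)}}{g(w)}.
\]
The key observation driving the whole argument is that the $g(w)$ in the numerator of $\pi(w)$ cancels exactly with the $g(w)$ in the denominator of $f(w)$. This is where I would pause to make sure the cancellation is legitimate: I need $g(w) \neq 0$ whenever the summand is nonzero. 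When $g(w) = 0$, the preceding claim shows $P(w) = \prod_i h(w,i) \leq Q(w)$, so $\max\cprn{0, P(w)-Q(w)} = 0$ as well, and the term contributes nothing; thus we may restrict the sum to $w$ with $g(w) > 0$ without loss, and there the cancellation is valid.

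After the cancellation, the sum becomes $\frac{1}{Z}\sum_w \max\cprn{0, P(w)-Q(w)}$. The final step is to invoke the standard identity for total variation distance stated just after the definition of $\dtv$ in the preliminaries, namely $\dtv(P,Q) = \sum_{w}\max(0, P(w)-Q(w))$. Substituting this gives exactly $\dtv(P,Q)/Z$, as required.

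I expect the only genuine subtlety to be the handling of the $g(w)=0$ case, since writing $f(w) = \max(0,P(w)-Q(w))/g(w)$ as a literal fraction is problematic when the denominator vanishes; the clean way is to argue that such terms contribute zero to both $\sum_w g(w)f(w)$ and to $\sum_w \max(0,P(w)-Q(w))$ simultaneously, so the identity $g(w)f(w) = \max(0,P(w)-Q(w))$ holds pointwise (including at zeros, where both sides are $0$). Everything else is a routine substitution, so the proof should be short: establish $g(w)f(w) = \max\cprn{0,P(w)-Q(w)}$ for all $w$, sum over $w$, divide by $Z$, and apply the TV-distance identity.
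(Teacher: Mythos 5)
Your proof is correct and follows essentially the same route as the paper's: substitute $\pi(w)=g(w)/Z$, cancel $g(w)$, and apply the identity $\dtv(P,Q)=\sum_w\max(0,P(w)-Q(w))$. Your explicit treatment of the $g(w)=0$ case (noting that then $P(w)=\prod_i h(w,i)\leq Q(w)$, so the numerator vanishes too) is a small point of care that the paper's proof glosses over, but it does not change the argument.
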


\begin{proof}
We have that $\cexpectq{f(w)}{\pi}$ is equal to
\begin{align*}
\cexpectq{\frac{\max\cprn{0,P(w)-Q(w)}}{g(w)}}{\pi}
&=\sum_w\pi\cprn{w}\frac{\max\cprn{0,P(w)-Q(w)}}{g(w)}\\
&=\sum_w\frac{g(w)}{Z}\frac{\max\cprn{0,P(w)-Q(w)}}{g(w)}\\
&=\frac{1}{Z}\sum_w\max\cprn{0,P(w)-Q(w)}\\
&=\frac{\dtv\cprn{P,Q}}{Z}.
\qedhere
\end{align*}
\end{proof}

We need the following lemma that ensures the estimand is large enough to facilitate Monte Carlo sampling.

\begin{lemma}
\label{lem:Z-upper-bound}
It is the case that $Z\leq2n\cdot\dtv(P, Q)$.
\end{lemma}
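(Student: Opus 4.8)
The plan is to exploit the probabilistic reading of $Z$ already in hand: by \Cref{clm:Z=prob} we have $Z = \cpr{X \neq Y}$ under $\mathcal{L}$, and by \Cref{obs:XisP} the $X$-marginal is $P$. The one extra fact I need about $\mathcal{L}$ is that, conditioned on the parents of $i$ taking a \emph{common} value $c$ in both coordinates, the construction of $\mathcal{L}$ via minima gives $\cpr{X_i = Y_i \mid X_{\Pi(i)} = Y_{\Pi(i)} = c} = \sum_b \min(P_{i|\Pi(i)}(b|c), Q_{i|\Pi(i)}(b|c)) = 1 - \dtv(P_{i|\Pi(i)}(\cdot|c), Q_{i|\Pi(i)}(\cdot|c))$, so that $\cpr{X_i \neq Y_i \mid X_{\Pi(i)} = Y_{\Pi(i)} = c} = \dtv(P_{i|\Pi(i)}(\cdot|c), Q_{i|\Pi(i)}(\cdot|c))$. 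I would then decompose $\{X \neq Y\}$ by the smallest index (in topological order) where $X$ and $Y$ disagree, writing $Z = \sum_{i=1}^n T_i$ with $T_i = \cpr{X_j = Y_j \text{ for } j<i,\ X_i \neq Y_i}$. The whole statement reduces to showing $T_i \leq 2\,\dtv(P,Q)$ for each $i$.

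Next I would evaluate $T_i$. On the event that the coordinates $j<i$ of $X$ and $Y$ agree on a common string $u$, the parents of $i$ (all of index $<i$) already agree, so the Markov property of the Bayes net $\mathcal{L}$ gives $T_i = \sum_u \cpr{X_j=Y_j=u_j \text{ for } j<i}\cdot \dtv(P_{i|\Pi(i)}(\cdot|u_{\Pi(i)}), Q_{i|\Pi(i)}(\cdot|u_{\Pi(i)}))$. The weight factorizes as $\prod_{j<i}\min(P_{j|\Pi(j)}(u_j|u_{\Pi(j)}), Q_{j|\Pi(j)}(u_j|u_{\Pi(j)})) \leq \prod_{j<i} P_{j|\Pi(j)}(u_j|u_{\Pi(j)})$, and since $\{1,\dots,i-1\}$ is closed under taking parents this product is exactly the $P$-marginal $\Pr_P[X_j=u_j \text{ for } j<i]$. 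Marginalizing out all coordinates except the parents yields $T_i \leq \sum_c P_{\Pi(i)}(c)\,\dtv(P_{i|\Pi(i)}(\cdot|c), Q_{i|\Pi(i)}(\cdot|c))$, where $P_{\Pi(i)}$ is the marginal of $P$ on the parent set. (This is the same quantity one gets by purely algebraically telescoping $g(w) = \prod_j p_j - \prod_j m_j$ with $p_j - m_j = \max(0, p_j - q_j)$.)

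The heart of the argument is the single inequality
\[
\sum_c P_{\Pi(i)}(c)\,\dtv\big(P_{i|\Pi(i)}(\cdot|c), Q_{i|\Pi(i)}(\cdot|c)\big) \leq \dtv(P_S, Q_S) + \dtv(P_{\Pi(i)}, Q_{\Pi(i)}),
\]
where $S = \{i\}\cup\Pi(i)$ and $P_S, Q_S$ denote the joint marginals on $S$. I would prove it by expanding the left side as $\tfrac12\sum_{x,c} P_{\Pi(i)}(c)\,|P_{i|\Pi(i)}(x|c) - Q_{i|\Pi(i)}(x|c)|$, rewriting $P_{\Pi(i)}(c)Q_{i|\Pi(i)}(x|c) = Q_{\Pi(i)}(c)Q_{i|\Pi(i)}(x|c) + (P_{\Pi(i)}(c)-Q_{\Pi(i)}(c))Q_{i|\Pi(i)}(x|c)$, and applying the triangle inequality: the first group sums to $\dtv(P_S,Q_S)$ and the second to $\dtv(P_{\Pi(i)},Q_{\Pi(i)})$ (using $\sum_x Q_{i|\Pi(i)}(x|c)=1$). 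The data-processing inequality—marginalization does not increase TV distance—then bounds each of the two right-hand terms by $\dtv(P,Q)$, so $T_i \leq 2\,\dtv(P,Q)$ and summing over $i$ gives $Z \leq 2n\,\dtv(P,Q)$.

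I expect the triangle-inequality step to be the only real obstacle. The quantity $\sum_c P_{\Pi(i)}(c)\,\dtv(P_{i|\Pi(i)}(\cdot|c),Q_{i|\Pi(i)}(\cdot|c))$ awkwardly pairs a $P$-weight with conditionals of both $P$ and $Q$, and the entire difficulty is to absorb the mismatch between the $P$- and $Q$-marginals on the parent set into the additive correction $\dtv(P_{\Pi(i)},Q_{\Pi(i)})$; this mismatch is precisely what forces the constant to be $2$ rather than $1$ (for product distributions there are no parents and the factor is just $n$). The remaining ingredients—the first-disagreement decomposition, the bound $\min \leq P$, and data processing—are routine.
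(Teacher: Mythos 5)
Your proposal is correct and follows essentially the same route as the paper: the same first-disagreement decomposition of $\{X\neq Y\}$, the same identification of the conditional disagreement probability with $\dtv\cprn{P_{i|\Pi(i)}(\cdot|\sigma),Q_{i|\Pi(i)}(\cdot|\sigma)}$, the same add-and-subtract triangle-inequality step absorbing the parent-marginal mismatch into a second $\dtv(P,Q)$ term, and data processing to finish. The only cosmetic difference is that you bound the agreement weight by explicitly factorizing it as a product of minima, whereas the paper simply relaxes the event $E_i$ to $\{X_i\neq Y_i,\,X_{\Pi(i)}=Y_{\Pi(i)}\}$ and uses $\cpr{(X_{\Pi(i)},Y_{\Pi(i)})=(\sigma,\sigma)}\leq P_{\Pi(i)}(\sigma)$; both land on the same bound for $T_i$.
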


\begin{proof}
By \Cref{clm:Z=prob}, it suffices to show that $\cpr{X \neq Y} \leq 2n\cdot \dtv(P,Q)$.
We split the event $(X\neq Y)$ into $n$ disjoint events $\brkts{E_i}_{i=1}^n$.
Without loss of generality, assume that $1,2,\ldots,n$ is the topological ordering of the vertices of $G$.
Event $E_i$ is defined as $(\bigwedge_{1\leq j\leq i-1} X_j= Y_j) \wedge (X_i \neq Y_i)$.
Note that the $E_i$'s are disjoint.
Thus $\pr{X \neq Y}= \sum_{i} \pr{E_i}$.
We have that
\[
\pr{E_i}
\leq \pr{(X_i \neq Y_i) \wedge (X_{\Pi(i)} = Y_{\Pi(i)})}\\
= \sum_{\sigma} \pr{(X_i \neq Y_i) \wedge (X_{\Pi(i)}, Y_{\Pi(i)}) = (\sigma, \sigma)}
\]
where $\sigma$ is an assignment for $\Pi(i)$ (note that the length of $\sigma$ is equal to the in-degree of $i$).
Henceforth, for notational brevity, we shall omit the dependence on $i$.
Thus,
\begin{align*}
\cprq{X\neq Y}{}
& = \sum_{i}\pr{E_i} \\
& \leq \sum_i\sum_\sigma\cprq{X_i\neq Y_i\land\prn{X_{\Pi\prn{i}},Y_{\Pi\prn{i}}}=\prn{\sigma,\sigma}}{} \\
& = \sum_i\sum_\sigma\cprq{X_i\neq Y_i|\prn{X_{\Pi\prn{i}},Y_{\Pi\prn{i}}}=\prn{\sigma,\sigma}}{}
\cprq{\prn{X_{\Pi\prn{i}},Y_{\Pi\prn{i}}}=\prn{\sigma,\sigma}}{}.
\end{align*}
We require the following claim.

\begin{claim}
\label{clm:difference-P-Q}
For any $\sigma$, $\cprq{X_i\neq Y_i|\prn{X_{\Pi\prn{i}},Y_{\Pi\prn{i}}}=\prn{\sigma,\sigma}}{} = \dtv\cprn{P_{i|\Pi\prn{i}}\cprn{\cdot|\sigma},Q_{i|\Pi\prn{i}}\cprn{\cdot|\sigma}}$.
\end{claim}

\begin{proof}
We have that $\cprq{X_i\neq Y_i|\prn{X_{\Pi\prn{i}},Y_{\Pi\prn{i}}}=\prn{\sigma,\sigma}}{}$ is
\begin{align*}
&1-\cprq{X_i=Y_i|\prn{X_{\Pi\prn{i}},Y_{\Pi\prn{i}}}=\prn{\sigma,\sigma}}{}\\
&\qquad=1-\sum_{c\in[\ell]}\cprq{\prn{X_i,Y_i}=\prn{c,c}|\prn{X_{\Pi\prn{i}},Y_{\Pi\prn{i}}}=\prn{\sigma,\sigma}}{}\\
&\qquad=1-\sum_{c\in[\ell]}\min\cprn{P_{i|\Pi\prn{i}}\cprn{c|\sigma},Q_{i|\Pi\prn{i}}\cprn{c|\sigma}}\\
&\qquad=\sum_{c\in[\ell]}P_{i|\Pi\prn{i}}\cprn{c|\sigma}-\sum_{c\in[\ell]}\min\cprn{P_{i|\Pi\prn{i}}\cprn{c|\sigma},Q_{i|\Pi\prn{i}}\cprn{c|\sigma}}\\
&\qquad=\sum_{c\in[\ell]}\prn{P_{i|\Pi\prn{i}}\cprn{c|\sigma}-\min\cprn{P_{i|\Pi\prn{i}}\cprn{c|\sigma},Q_{i|\Pi\prn{i}}\cprn{c|\sigma}}}\\
&\qquad=\sum_{c\in[\ell]}\max\cprn{0,P_{i|\Pi\prn{i}}\cprn{c|\sigma}-Q_{i|\Pi\prn{i}}\cprn{c|\sigma}}\\
&\qquad=\dtv\cprn{P_{i|\Pi\prn{i}}\cprn{\cdot|\sigma},Q_{i|\Pi\prn{i}}\cprn{\cdot|\sigma}}.
\qedhere
\end{align*}
\end{proof}

By \Cref{clm:difference-P-Q} we have that $\cprq{X\neq Y}{}$ is at most
\begin{align*}
&\sum_i\sum_\sigma\cprq{\prn{X_{\Pi\prn{i}},Y_{\Pi\prn{i}}}=\prn{\sigma,\sigma}}{}
\dtv\cprn{P_{i|\Pi\prn{i}}\cprn{\cdot|\sigma},Q_{i|\Pi\prn{i}}\cprn{\cdot|\sigma}}\\
&\qquad\leq\sum_i\sum_\sigma\cprq{X_{\Pi\prn{i}}=\sigma}{}\frac{1}{2}\sum_c\abs{P_{i|\Pi\prn{i}}\cprn{c|b}-Q_{i|\Pi\prn{i}}\cprn{c|\sigma}}\\
&\qquad\leq\sum_i\sum_\sigma P_{\Pi\prn{i}}(\sigma)\frac{1}{2}\sum_c\abs{P_{i|\Pi\prn{i}}\cprn{c|\sigma}-Q_{i|\Pi\prn{i}}\cprn{c|\sigma}}~~\mbox{(since $X \sim P$ by \Cref{obs:XisP})}\\
&\qquad=\sum_i\sum_\sigma\frac{1}{2}\sum_c\abs{P_{\Pi\prn{i}}\cprn{\sigma}P_{i|\Pi\prn{i}}\cprn{c|\sigma}-P_{\Pi\prn{i}}\cprn{\sigma}Q_{i|\Pi\prn{i}}\cprn{c|\sigma}}\\
&\qquad=\sum_i\sum_\sigma\frac{1}{2}\sum_c\left|P_{\Pi\prn{i}}\cprn{\sigma}P_{i|\Pi\prn{i}}\cprn{c|\sigma}
-Q_{\Pi\prn{i}}\cprn{\sigma}Q_{i|\Pi\prn{i}}\cprn{c|\sigma}\right.\\
&\qquad\qquad\left.+Q_{\Pi\prn{i}}\cprn{\sigma}Q_{i|\Pi\prn{i}}\cprn{c|\sigma}
-P_{\Pi\prn{i}}\cprn{\sigma}Q_{i|\Pi\prn{i}}\cprn{c|\sigma}\right|\\
&\qquad\leq\sum_i\sum_\sigma\frac{1}{2}\sum_c\abs{P_{\Pi\prn{i}}\cprn{\sigma}P_{i|\Pi\prn{i}}\cprn{c|\sigma}
-Q_{\Pi\prn{i}}\cprn{\sigma}Q_{i|\Pi\prn{i}}\cprn{c|\sigma}}\\
&\qquad\qquad+\sum_i\sum_\sigma\frac{1}{2}\sum_c
\left|Q_{\Pi\prn{i}}\cprn{\sigma}Q_{i|\Pi\prn{i}}\cprn{c|\sigma}
-P_{\Pi\prn{i}}\cprn{\sigma}Q_{i|\Pi\prn{i}}\cprn{c|\sigma}\right|\\
&\qquad=\sum_i\sum_\sigma\frac{1}{2}\sum_c\abs{P_{i|\Pi\prn{i}}\cprn{c|\sigma}
-Q_{i|\Pi\prn{i}}\cprn{c|\sigma}}\\
&\qquad\qquad+\sum_i\sum_\sigma\frac{1}{2}\abs{Q_{\Pi\prn{i}}\cprn{\sigma}
-P_{\Pi\prn{i}}\cprn{\sigma}}\sum_c Q_{i|\Pi\prn{i}}\cprn{c|\sigma}\\
&\qquad=\sum_i\frac{1}{2}\sum_\sigma\sum_c\abs{P_{i|\Pi\prn{i}}\cprn{c|\sigma}
-Q_{i|\Pi\prn{i}}\cprn{c,\sigma}}\\
&\qquad\qquad+\sum_i\sum_\sigma\frac{1}{2}\abs{Q_{\Pi\prn{i}}\cprn{\sigma}
-P_{\Pi\prn{i}}\cprn{\sigma}}\\
&\qquad=\sum_i\dtv\cprn{P_{i|\Pi\prn{i}},Q_{i|\Pi\prn{i}}}+\sum_i\dtv\cprn{P_{i},Q_{i}}\\
&\qquad\leq2n\cdot\dtv(P,Q).
\end{align*}
The last inequality follows because the inequalities
\[
\dtv\cprn{P_{i|\Pi\prn{i}},Q_{i|\Pi\prn{i}}}\leq\dtv\cprn{P,Q}
\qquad
\dtv\cprn{P_{i},Q_{i}}\leq\dtv\cprn{P,Q}
\]
hold.
\end{proof}

We are now ready to prove the correctness of and provide a running time bound for \Cref{alg:FPRAS}.
We have, from Hoeffding's inequality (\Cref{lem:Hoeffding}), that
\begin{align*}
\cpr{\abs{\mathsf{Est}-\dtv\cprn{P,Q}}>\varepsilon\dtv\cprn{P,Q}}
&=\cpr{\abs{\frac{Z}{m}\sum_{i=1}^mf\cprn{w^i}-Z\cexpectq{f\cprn{w}}{\pi}}>\varepsilon\dtv\cprn{P,Q}}\\
&\qquad=\cpr{\abs{\sum_{i=1}^mf\cprn{w^i}-m\cexpectq{f\cprn{w}}{\pi}}>\frac{m\varepsilon}{Z}\dtv\cprn{P,Q}}\\
&\qquad\leq2\exp\!\left(-\frac{2m^2\varepsilon^2\dtv^2\cprn{P,Q}}{Z^2\sum_{i=1}^m\prn{0-1}^2}\right)\\
&\qquad\leq2\exp\!\left(-\frac{2m^2\varepsilon^2\dtv^2\cprn{P,Q}}{4n^2\dtv^2\cprn{P,Q}m}\right)\\
&\qquad=2\exp\!\left(-\frac{m\varepsilon^2}{2n^2}\right),
\end{align*}
which is at most $\delta$ whenever $m=\Omega\cprn{n^2\varepsilon^{-2}\log \delta^{-1}}$.
The second inequality follows from \Cref{lem:Z-upper-bound}.

Thus  the running time of \Cref{alg:FPRAS} is $O\cprn{mn\ell}$, which equals $O\cprn{n^3\varepsilon^{-2}\ell\log \delta^{-1}}$, since we draw $m$ samples from $\pi$, we can sample from $\pi$ in time $O\cprn{n\ell}$, and evaluate $f$ in time $O\cprn{n}$.
Finally, the number of probabilistic inference queries is at most $O\cprn{n^3\varepsilon^{-2}\ell\log \delta^{-1}}$.

\subsection{Application: Bayes Nets of Small Treewidth}

\label{sec:FPRAS-bounded-tw}

We are now ready to prove \Cref{thm:FPRAS-bounded-tw}.

\begin{customthm}{\ref{thm:FPRAS-bounded-tw}}[Formal]
\label{thm:FPRAS-bounded-tw-formal}
There is an FPRAS for estimating the TV distance between two Bayes nets of treewidth $w=O\cprn{\log n}$ and alphabet size $\ell=O\cprn{1}$, which are defined over the same DAG of $n$ nodes.
In particular, if $\varepsilon$ and $\delta$ are the accuracy and confidence errors of the FPRAS, respectively, the FPRAS runs in time $\cpoly{n}\cdot O\cprn{\varepsilon^{-2}\log\delta^{-1}}$.
\end{customthm}

The proof of \Cref{thm:FPRAS-bounded-tw-formal} will follow from the lemma below, \Cref{lem:inference-for-bounded-tw}, and \Cref{thm:dtv-to-inference} for $\ell=O\cprn{1}$ and $T\cprn{G,\ell^2}=O\cprn{\cpoly{n}}$.

\begin{lemma}
\label{lem:inference-for-bounded-tw}
Probabilistic inference is efficient for all Bayes nets over $n$ variables which have alphabet size $\ell=O\cprn{1}$ and treewidth $O\cprn{\log n}$.
\end{lemma}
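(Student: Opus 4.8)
The plan is to reduce the probabilistic inference task to a direct application of the variable elimination algorithm (\Cref{thm:variable-elimination}), after first obtaining a tree decomposition of appropriate width via the tree decomposition algorithm (\Cref{thm:tree-decomposition}). Recall that ``probabilistic inference is efficient'' means computing $\cprq{X_1\in S_1,\dots,X_n\in S_n}{B}$ in time polynomial in the input size for any sets $S_1,\dots,S_n\subseteq\sqbra{\ell}$. Throughout, let $B$ be a Bayes net over $n$ variables with alphabet size $\ell=O\cprn{1}$ and treewidth $w=O\cprn{\log n}$, where by definition the treewidth of $B$ equals the treewidth of its moralization $M_B$.

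First I would compute the moralization $M_B$ of $B$; by \Cref{lem:moralization-complexity} this takes time $O\cprn{\cpoly{n}}$. By the definition of the treewidth of a Bayes net, $M_B$ has treewidth at most $w=O\cprn{\log n}$. Next I would invoke \Cref{thm:tree-decomposition} on $M_B$ with this value of $w$: since the treewidth of $M_B$ is at most $w$, the algorithm produces a tree decomposition $\mathcal{T}$ of $M_B$ of width at most $4w+1$. The running time is $O\cprn{w3^{3w}n^2}$; substituting $w=O\cprn{\log n}$, the factor $3^{3w}=3^{O\cprn{\log n}}=n^{O\cprn{1}}$ is polynomial in $n$, so this step runs in $O\cprn{\cpoly{n}}$ time.

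Finally I would feed $B$, the sets $S_1,\dots,S_n$, the moralization $M_B$, and the tree decomposition $\mathcal{T}$ (of width $w'\leq 4w+1$) into the variable elimination algorithm of \Cref{thm:variable-elimination}, which computes $\cprq{X_1\in S_1,\dots,X_n\in S_n}{B}$ in time $O\cprn{n\ell^{w'}}$. Here the crucial point is to bound $\ell^{w'}$: since $\ell=O\cprn{1}$ and $w'\leq 4w+1=O\cprn{\log n}$, we get $\ell^{w'}=\ell^{O\cprn{\log n}}=n^{O\cprn{1}}$, which is polynomial in $n$. Thus the overall running time is dominated by a polynomial in $n$, establishing efficiency.

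The main obstacle is bookkeeping rather than conceptual: one must verify that the width blowup from $w$ to $4w+1$ in \Cref{thm:tree-decomposition} is harmless, which hinges on the interplay between the constant alphabet size $\ell$ and the logarithmic treewidth. Specifically, the exponential dependence $\ell^{w'}$ in the variable elimination running time stays polynomial precisely because $\ell$ is constant and $w'=O\cprn{\log n}$; if either of these assumptions were relaxed (say $\ell$ growing with $n$, or treewidth $\omega\cprn{\log n}$), the bound would fail. I would therefore emphasize in the proof that both hypotheses are used in an essential way to keep $\ell^{w'}$ polynomially bounded, and otherwise the argument is a straightforward composition of the two cited algorithms with the moralization step.
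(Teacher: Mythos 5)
Your proposal is correct and follows essentially the same route as the paper: moralize the Bayes net (\Cref{lem:moralization-complexity}), compute a tree decomposition of width $O\cprn{\log n}$ via \Cref{thm:tree-decomposition}, and run variable elimination (\Cref{thm:variable-elimination}), with the same observation that $\ell^{O\cprn{\log n}}$ is polynomial since $\ell=O\cprn{1}$. No gaps.
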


\begin{proof}
Let $B$ be a Bayes net over variables $X_1,\dots,X_n$ that has alphabet size $\ell=O\cprn{1}$ and treewidth $w=O\cprn{\log n}$.
Let $S_1,\dots,S_n\subseteq\sqbra{\ell}$ be sets.
The probabilistic inference task that we want to perform is to compute the probability $\cprq{X_1\in S_1,\dots,X_n\in S_n}{B}$.

First, we construct the moralization of $B$ (see \Cref{def:moralization}), namely $M_B$, in time $O\cprn{\cpoly{n}}$ by invoking \Cref{lem:moralization-complexity}.
Then, we use \Cref{thm:tree-decomposition} to compute a tree decomposition $\mathcal{T}$ of $M_B$ of width at most $4w+1\leq5w$ in time $O\cprn{w3^{3w}n^2}$.
Finally, we use the variable elimination algorithm of \Cref{thm:variable-elimination} on $B$, $S_1,\dots,S_n$, $M_B$, and $\mathcal{T}$ to compute $\cprq{X_1\in S_1,\dots,X_n\in S_n}{B}$ in time $O\cprn{n\ell^{5w}}$.

The running time of this procedure is $O\cprn{\cpoly{n}}+O\cprn{w3^{3w}n^2}+O\cprn{n\ell^{5w}}=O\cprn{\cpoly{n}}$, whereby we have used the facts that $\ell=O\cprn{1}$ and $w=O\cprn{\log n}$.
This concludes the proof.
\end{proof}

The proof of \Cref{thm:FPRAS-bounded-tw-formal} now follows by invoking \Cref{thm:dtv-to-inference-formal} for $\ell=O\cprn{1}$ and $T\cprn{G,\ell^2}=O\cprn{\cpoly{n}}$.

\section{TV Distance Between a Bayes Net and the Uniform Distribution}

Here, we prove \Cref{thm:sharp-P-complete} and \Cref{thm:FPRAS-bayes-net-vs-uniform}.

\subsection{\texorpdfstring{$\#\P$}{\#P}-Completeness}

\label{sec:hardness}

The main result of this subsection is \Cref{thm:sharp-P-complete}.
Recall that a function $f$ from $\{0,1\}^*$ to non-negative integers is in the class $\#\P$ if there is a polynomial time non-deterministic Turing machine $M$ so that for any $x$, it is the case that $f(x)$ is equal to the number of accepting paths of $M(x)$.

We now prove \Cref{thm:sharp-P-complete}.

\begin{proof}[Proof of \texorpdfstring{\Cref{thm:sharp-P-complete}}{}]
In what follows, we separately show membership in $\#\P$ and $\#\P$-hardness.

\subparagraph{Membership in \texorpdfstring{$\#\P$}{}.}

Let $P$ be a Bayes net distribution over the Boolean domain $\bool^n$.
The goal is to design a nondeterministic machine ${\cal N}$ so that the number of accepting paths of ${\cal N}$ (normalized by an appropriate quantity) equals $\dtv(P,\U)$.
We will assume that the probabilities specified in the CPTs of the Bayes net for $P$ are fractions.
Let $M$ be equal to $2^n$ times the product of the denominators of all the probabilities in the CPTs.
The non-deterministic machine ${\cal N}$ first guesses an element $i\in\bool^n$ in the sample space of $P$, computes $|P(i)-1/2|$ by using the CPTs, then guesses an integer $0\leq z\leq M$, and finally accepts if and only if $1\leq z\leq M|P(i)-1/2|$.
(Note that $M|P(i)-1/2|=\abs{M\cdot P(i)-M/2}$ is an integer.)
It follows that
\[
\dtv(P,\U)
=\frac{1}{2}\sum_{i\in\bool^n}\abs{P\cprn{i}-\frac{1}{2}}
=\frac{\text{number of accepting paths of }{\cal N}}{2M}
\]
since the number of accepting paths of ${\cal N}$ is equal to $\sum_{i\in\bool^n}\prn{M\abs{P\cprn{i}-1/2}}$ which is equal to $M\sum_{i\in\bool^n}\abs{P\cprn{i}-1/2}$, or $2M\dtv(P,Q)$.

\subparagraph{\texorpdfstring{$\#\P$}{}-Hardness.}

For the $\#\P$-hardness part, the proof gives a Turing reduction from the problem of counting the satisfying assignments of a CNF formula (which is $\#\P$-hard to compute) to computing the total variation distance between a Bayes net distribution and the uniform distribution.
In what follows, by a graph of a formula we mean the DAG that captures the circuit structure of $F$, whereby the nodes are either AND, OR, NOT, or variable gates, and the edges correspond to wires connecting the gates.

Let $F$ be a CNF formula viewed as a Boolean circuit.
Assume $F$ has $n$ input variables $x_1,\ldots, x_n$ and $m$ gates $\Gamma=\brkts{y_1,\ldots,y_m}$, where $\Gamma$ is topologically sorted with $y_m$ being the output gate.
We will define a Bayes net distribution on some DAG $G$ which, intuitively, is the graph of $F$.

The vertex set of $G$ is split into two sets $\mathcal{X}$ and $\mathcal{Y}$, and a node $Z$.
The set ${\mathcal X}=\brkts{X_i}_{i=1}^n$ contains $n$ nodes with node $X_i$ corresponding to variable $x_i$ and the set ${\mathcal Y}=\brkts{Y_i}_{i=1}^m$ contains $m$ nodes with each node $Y_i$ corresponding to gate $y_i$.
So totally there are $n+m+1$ nodes.
There is a directed edge from node $V_i$ to node $V_j$ if the gate/variable corresponding to $V_i$ is an input to $V_j$.

The distribution $P$ on $G$ is given by a CPT defined as follows.
Each $X_i$ is a uniformly random bit.
For each $Y_i$, its CPT is deterministic:
For each of the setting of the parents $Y_j,Y_k$, namely $y_j,y_k$, the variable $Y_i$ takes the value of the gate $y_i$ for that setting of its inputs $y_j,y_k$.
Finally, let $Z$ be the value of $Y_m$ OR-ed with a random bit.

Note that the formula $F$ computes a Boolean function on the input variables.
Let $f:\bool^n\to\bool$ be this function.
We extend $f$ to $\{0,1\}^{m}$ (i.e., $f:\bool^n\to\bool^m$) to also include the values of the intermediate gates.

With this notation in mind, for any binary string $XYZ$ of length $n+m+1$ it is the case that $P$ has a probability $0$ if $Y \neq f(X)$.
Let $A:=\brkts{x\mid F\cprn{x}=1}$ and $R:=\brkts{x\mid F\cprn{x}=0}$.

To finish the proof, we will write the number of satisfying assignments of $F$, namely $\abs{A}$, as a polynomial-time computable function of $\dtv(P,\U)$:
We have
\begin{align*}
2\cdot\dtv(P,\U)
=\sum_{X,Y,Z}\abs{P-\U}
=\underbrace{\sum_{\substack{X,Y,Z\\ Y\neq f(X)}}|P-\U|}_{(1)}
+\underbrace{\sum_{\substack{X,Y,Z\\ Y = f(X)}} |P-\U|}_{(2)}
\end{align*}
where we have abused the notation $P,\U$ to denote the probabilities $P\cprn{X,Y,Z},\U\cprn{X,Y,Z}$.
We will calculate $(1)$ and $(2)$ separately.
For $(1)$ we have:
\begin{align*}
{\sum_{\substack{X,Y,Z\\ Y\neq f(X)}}|P-\U|}
={\sum_{\substack{X,Y,Z\\ Y\neq f(X)}}\abs{0-\frac{1}{2^{n+m+1}}}}
=\frac{2^{n+1}(2^m-1)}{2^{n+m+1}}
=1-\frac{1}{2^m}.
\end{align*}
For $(2)$, we have
\begin{align*}
{\sum_{\substack{X,Y,Z \\ Y = f(X)}} |P-\U|}
&=\underbrace{\sum_{\substack{X,f(X),Z \\ X \in A}} |P-\U|}_{(3)}
+\underbrace{\sum_{\substack{X,f(X),Z \\ X \in R}} |P-\U|}_{(4)}
\end{align*}
and now we calculate the terms $(3)$ and $(4)$ separately.
For $(3)$, we have:
\begin{align*}
{\sum_{\substack{X,f(X),Z \\ X \in A}} |P-\U|}
&={\sum_{\substack{X,f(X),0 \\ X \in A}} |P-\U|}
+{\sum_{\substack{X,f(X),1 \\ X \in A}} |P-\U|}\\
&={\sum_{\substack{X,f(X),0 \\ X \in A}} \abs{0-\frac{1}{2^{n+m+1}}}}
+{\sum_{\substack{X,f(X),1 \\ X \in A}} \abs{\frac{1}{2^n}-\frac{1}{2^{n+m+1}}}}\\
&=\frac{|A|}{2^{n+m+1}} + \frac{|A| \cdot (2^{m+1} -1)}{2^{n+m+1}}\\
&=\frac{|A|}{2^n}
\end{align*}
and for $(4)$ we have
\begin{align*}
\sum_{\substack{X,f(X),Z \\ X \in R}} |P-\U|
&=\sum_{\substack{X,f(X),0 \\ X \in R}} |P-\U|
+\sum_{\substack{X,f(X),1 \\ X \in R}} |P-\U|\\
&=\sum_{\substack{X,f(X),0 \\ X \in R}} \abs{\frac{1}{2^{n+1}}-\frac{1}{2^{n+m+1}}}
+\sum_{\substack{X,f(X),1 \\ X \in R}} \abs{\frac{1}{2^{n+1}}-\frac{1}{2^{n+m+1}}}\\
&=\frac{|R|\cdot(2^m-1)\cdot 2}{2^{n+m+1}}.
\end{align*}
Thus 
\begin{align*}
2\cdot\dtv(P,\U)
&=(1) + (2) \\
&=(1) + (3) + (4) \\
&=1-\frac{1}{2^m} + \frac{|A|}{2^n} + \frac{|R|\cdot (2^m-1) \cdot 2}{2^{n+m+1}} \\
&=2\prn{1-\frac{1}{2^m} + \frac{|A|}{2^{m+n+1}}}
\end{align*}
since $\abs{A}+\abs{R}=2^n$.
For that matter, $\dtv(P,\U)=\frac{|A|}{2^{n+m+1}}+\prn{1-\frac{1}{2^m}}$ or
\[
\abs{A}=2^{n+m+1}\prn{\dtv(P,\U)-\prn{1-\frac{1}{2^m}}}.
\]
That concludes the proof.
\end{proof}

\subsection{Estimation in Randomized Polynomial Time}

\label{sec:Bayes-nets-vs-Uniform}

We prove \Cref{thm:FPRAS-bayes-net-vs-uniform}.

\begin{customthm}{\ref{thm:FPRAS-bayes-net-vs-uniform}}[Formal]
\label{thm:FPRAS-bayes-net-vs-uniform-formal}
There is an FPRAS for estimating the TV distance between a Bayes net $P$ and the uniform distribution.
Let $n$ be the number of nodes of $P$, let $\ell$ be the size of its alphabet, and let $d$ be its maximum in-degree.
Then the running time of this FPRAS is $O\cprn{n^3\ell^{2d+2}\varepsilon^{-2}\log\delta^{-1}}$ whereby $\varepsilon$ is the accuracy error and $\delta$ is the confidence error of the FPRAS.
\end{customthm}

\begin{remark}
Note that the running time of the FPRAS of \Cref{thm:FPRAS-bayes-net-vs-uniform-formal} is polynomial in the input length, as the description of the Bayes net $P$ in terms of the CPTs has size at least $n+\ell^{d+1}$.
\end{remark}

We shall now prove \Cref{thm:FPRAS-bayes-net-vs-uniform-formal}.
We require the following lemma (which we will prove below).

\begin{lemma}
\label{lem:ratio}
For all $x$, it is the case that
\[
1-O\cprn{\dtv\cprn{P,\U}\ell^{d+1}n}
\leq P\cprn{x}\ell^n
\leq1+O\cprn{\dtv\cprn{P,\U}\ell^{d+1}n}
\]
whenever $\dtv\cprn{P,\U}\leq\frac{1}{16\ell^{d+1}}$.
\end{lemma}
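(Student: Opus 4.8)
We must show that for every $x\in\sqbra{\ell}^n$, the quantity $P\cprn{x}\ell^n$ (which is exactly the density ratio $P(x)/\U(x)$) is sandwiched in $1\pm O\cprn{\dtv\cprn{P,\U}\ell^{d+1}n}$, under the smallness assumption $\dtv\cprn{P,\U}\leq\frac{1}{16\ell^{d+1}}$.

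**My approach.** The plan is to control the deviation of $P$ from $\U$ one coordinate at a time, telescoping along the topological order, and to bound each coordinate's conditional deviation by a local TV distance which is in turn dominated by the global $\dtv\cprn{P,\U}$. Concretely, I would write
\[
P\cprn{x}\ell^n=\prod_{i=1}^n\frac{P_{i|\Pi\prn{i}}\cprn{x_i|x_{\Pi\prn{i}}}}{1/\ell}=\prod_{i=1}^n\ell\,P_{i|\Pi\prn{i}}\cprn{x_i|x_{\Pi\prn{i}}},
\]
so the whole thing is a product of $n$ factors, each of the form $\ell\,P_{i|\Pi\prn{i}}\cprn{x_i|x_{\Pi\prn{i}}}$. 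First I would show that each such factor is close to $1$: writing $r_i:=\ell\,P_{i|\Pi\prn{i}}\cprn{x_i|x_{\Pi\prn{i}}}-1$, I want a bound of the form $\abs{r_i}\leq\text{(local TV)}\cdot\ell$, where the local TV is the total variation distance between the conditional $P_{i|\Pi\prn{i}}\cprn{\cdot|x_{\Pi\prn{i}}}$ and the uniform distribution on $\sqbra{\ell}$. The key structural fact to establish is that this single-coordinate conditional TV distance is itself at most $O\cprn{\dtv\cprn{P,\U}\ell^{d}}$: the conditioning event $x_{\Pi\prn{i}}$ fixes at most $d$ parent coordinates, and since $P$ is $\eps$-close to uniform, the induced conditional marginal cannot be too far from uniform. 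This is where the $\ell^{d}$ factor enters—conditioning on a parent configuration of probability $\gtrsim\ell^{-d}$ under $\U$ can blow up a TV discrepancy by at most $\ell^{d}$, and the extra $\ell$ converts the per-symbol density gap into the product-factor gap.

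**Telescoping and the smallness hypothesis.** Once I have $\abs{r_i}\leq O\cprn{\dtv\cprn{P,\U}\ell^{d+1}}=:\eta$ for each $i$, I would bound the product $\prod_{i=1}^n(1+r_i)$. The smallness assumption $\dtv\cprn{P,\U}\leq\frac{1}{16\ell^{d+1}}$ is precisely what makes $\eta$ a small constant, so that I can use the elementary two-sided estimates $1-\sum_i\abs{r_i}\leq\prod_i(1+r_i)\leq\prod_i(1+\abs{r_i})\leq\exp\cprn{\sum_i\abs{r_i}}\leq 1+O\cprn{\sum_i\abs{r_i}}$, valid when $\sum_i\abs{r_i}$ stays bounded. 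Since each $\abs{r_i}\leq\eta$ and there are $n$ of them, $\sum_i\abs{r_i}\leq n\eta=O\cprn{\dtv\cprn{P,\U}\ell^{d+1}n}$, giving exactly the claimed two-sided bound $1\pm O\cprn{\dtv\cprn{P,\U}\ell^{d+1}n}$. I would need the hypothesis to guarantee $n\eta$ (or at least each $\eta$) is small enough that the $\exp$-to-linear approximation and the lower tail $1-\sum\abs{r_i}$ are meaningful; I would check that the constant $\frac{1}{16}$ comfortably absorbs the conversion constants.

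**Main obstacle.** The delicate step is the per-coordinate bound relating the conditional-marginal TV distance $\dtv\cprn{P_{i|\Pi\prn{i}}\cprn{\cdot|x_{\Pi\prn{i}}},\U_\ell}$ to the global $\dtv\cprn{P,\U}$ with only an $\ell^{d}$ loss. The subtlety is that a conditional distribution can in principle be far from uniform even when the joint is close, so I must use that $\U$ is a \emph{product} (hence its conditionals on any parent set are exactly uniform on $\sqbra{\ell}$) together with the fact that the parent configuration $x_{\Pi\prn{i}}$ has $\U$-probability at least $\ell^{-d}$. The cleanest route is the standard inequality that conditioning on an event $E$ inflates TV distance by at most a $1/\U(E)$ factor, i.e.\ $\dtv\cprn{P(\cdot|E),\U(\cdot|E)}\leq\dtv\cprn{P,\U}/\U(E)$, applied with $E=\{x_{\Pi\prn{i}}\}$ so that $\U(E)\geq\ell^{-d}$, and then marginalizing from the joint conditional down to the single coordinate $i$ (which only decreases TV distance). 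Getting the bookkeeping of this conditioning-inflation argument exactly right, and confirming it yields the stated $\ell^{d+1}$ rather than a worse power, is the crux of the proof; everything after it is routine product estimation.
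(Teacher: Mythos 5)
Your proposal follows essentially the same route as the paper's proof: write $P(x)\ell^n=\prod_{i=1}^n \ell\, P_{i|\Pi(i)}(x_i|x_{\Pi(i)})$, show each factor deviates from $1$ by at most $O(\dtv(P,\U)\,\ell^{d+1})$, and then multiply the $n$ factors. The only real difference is in how the per-coordinate bound is derived. You invoke the conditioning-inflation inequality $\dtv(P(\cdot|E),\U(\cdot|E))\le \dtv(P,\U)/\U(E)$ with $E$ the parent configuration (so $\U(E)\ge \ell^{-d}$) and then marginalize to coordinate $i$; this inequality is correct as stated (and tight), and since $\U$ is a product measure its conditionals are exactly uniform, so your bookkeeping goes through and even yields a better constant. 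The paper does the same thing by hand: it passes to the $(d+1)$-coordinate marginal on $\{i\}\cup\Pi(i)$, extracts pointwise bounds on the joint and on the parent marginal from TV closeness, and divides, obtaining $|P_{i|\Pi(i)}(x_i|x_{\Pi(i)})-1/\ell|\le 8\gamma\ell^{d}$. These are the same argument in different packaging.

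The one genuine issue is exactly the step you flagged for verification, and the verification does not go through: turning $\prod_i(1+r_i)\le\exp\bigl(\sum_i|r_i|\bigr)$ into $1+O\bigl(\sum_i|r_i|\bigr)$ requires $\sum_i|r_i| = n\cdot O(\gamma\ell^{d+1})$ to be $O(1)$, whereas the hypothesis $\gamma\le \tfrac{1}{16\ell^{d+1}}$ only makes each individual $|r_i|$ a small constant; it places no bound on $n\gamma\ell^{d+1}$. You should know, however, that the paper's own proof has the identical defect: it asserts $(1+\alpha)^k\le 1+2\alpha k$ ``whenever $\alpha<1/2$ and $k>0$,'' which is false once $\alpha k\gg 1$. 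Indeed the lemma as literally stated fails in that regime: for $\ell=2$, $d=0$, take $P=\Bern(1/2+c/\sqrt{n})^{\otimes n}$ with a small constant $c$; then $\dtv(P,\U)\le 1/32$ by Pinsker's inequality, yet $P(1^n)2^n=e^{\Theta(\sqrt{n})}$, which eventually exceeds $1+Cn$ for any fixed $C$. So your argument is on exactly the same footing as the paper's: both (like the lemma itself) are valid precisely in the regime $n\gamma\ell^{d+1}=O(1)$, and both silently need that extra restriction.
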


The proof of \Cref{thm:FPRAS-bayes-net-vs-uniform} now resumes as follows.
First, let us assume that $\dtv\cprn{P,\U}\le \frac{1}{16\ell^{d+1}}$ so that \Cref{lem:ratio} holds.
We have that $\dtv\cprn{P,\U}$ is equal to
\begin{align*}
\frac{1}{2}\sum_x\abs{P\cprn{x}-\U\cprn{x}}
&=\sum_x\max\cprn{0,P\cprn{x}-\U\cprn{x}}\\
&=\sum_x\U\cprn{x}\max\cprn{0,\frac{P\cprn{x}}{\U\cprn{x}}-1}\\
&=\cexpectq{\max\cprn{0,\frac{P\cprn{x}}{\U\cprn{x}}-1}}{x\sim\U}\\
&=\cexpectq{\max\cprn{0,P\cprn{x}\ell^n-1}}{x\sim\U}.
\end{align*}
This yields a natural estimator for $\dtv\cprn{P,\U}$, namely $\mathsf{Est}$, as follows:
\begin{enumerate}
\item
Sample $x_1,\dots,x_m\sim\U$ for some value of $m$ that we will fix later;
\item
compute $\max\cprn{0,P\cprn{x_i}\ell^n-1}$ for all $1\leq i\leq m$;
\item
output $\prn{1/m}\sum_{i=1}^m\max\cprn{0,P\cprn{x_i}\ell^n-1}$.
\end{enumerate}
We will now prove the correctness and upper bound the running time of this procedure.
We have from Hoeffding's inequality (\Cref{lem:Hoeffding}) and \Cref{lem:ratio} that
\begin{align*}
&\cpr{\abs{\mathsf{Est}-\dtv\cprn{P,\U}}>\varepsilon\dtv\cprn{P,\U}}\\
&\qquad=\cpr{\abs{\frac{1}{m}\sum_{i=1}^m\max\cprn{0,P\cprn{x_i}\ell^n-1}-\cexpectq{\max\cprn{0,P\cprn{x}\ell^n-1}}{x\sim\U}}>\varepsilon\dtv\cprn{P,\U}}\\
&\qquad=\cpr{\abs{\sum_{i=1}^m\max\cprn{0,P\cprn{x_i}\ell^n-1}-m\cexpectq{\max\cprn{0,P\cprn{x}\ell^n-1}}{x\sim\U}}>m\varepsilon\dtv\cprn{P,\U}}\\
&\qquad\leq 2\exp\!\left(-\frac{2m^2\varepsilon^2\dtv^2\cprn{P,\U}}{\sum_{i=1}^m\prn{0-O(\dtv\cprn{P,\U}\ell^{d+1}n)}^2}\right)\\
&\qquad=2\exp\!\left(-\frac{2m^2\varepsilon^2\dtv^2\cprn{P,\U}}{m\cdot O(\dtv^2\cprn{P,\U}\ell^{2d+2}n^2)}\right)\\
&\qquad=2\exp\!\left(-\frac{m\varepsilon^2}{O\cprn{\ell^{2d+2}n^2}}\right),
\end{align*}
which is at most $\delta$ whenever $m=\Omega\cprn{n^2\ell^{2d+2}\varepsilon^{-2}\log\delta^{-1}}$.

The running time of this procedure is $O\cprn{mn}=O\cprn{n^3\ell^{2d+2}\varepsilon^{-2}\log\delta^{-1}}$, since we draw $m$ samples and $P$ can be evaluated on any sample in time $O\cprn{n}$.

If $\dtv\prn{P,\U}>\frac{1}{16\ell^{d+1}}$, then it suffices to additively approximate $\dtv\prn{P,\U}$ up to error $\varepsilon/\prn{16\ell^{d+1}}$.
This can be done by Monte Carlo sampling using $m=\Omega\cprn{\ell^{2d+2}\varepsilon^{-2}\log\delta^{-1}}$ samples and $O(mn)=O\cprn{n\ell^{2d+2}\varepsilon^{-2}\log\delta^{-1}}$ time.

We now prove \Cref{lem:ratio}.

\begin{proof}[Proof of \Cref{lem:ratio}]
Let us denote the maximum in-degree of $P$ by $d$.
Let $X_0$ be an arbitrary node with its parents as $X_1,\dots,X_d$.

We have that $\gamma:=\dtv\cprn{P,\U}$ is at least
\begin{align*}
&\dtv\cprn{\prn{X_0,\dots,X_d},\prn{Y_0,\dots,Y_d}}\\
&\qquad=\frac{1}{2}\sum_{v_0}\cdots\sum_{v_d}\abs{\cpr{\prn{X_0,\dots,X_d}=\prn{v_0,\dots,v_d}}
-\cpr{\prn{Y_0,\dots,Y_d}=\prn{v_0,\dots,v_d}}}\\
&\qquad=\frac{1}{2}\sum_{v_0}\cdots\sum_{v_d}\abs{\cpr{X_0=v_0|X_1=v_1,\dots,X_d=v_d}\cpr{X_1=v_1,\dots,X_d=v_d}-\frac{1}{\ell^{d+1}}}
\end{align*}
or
\[
\frac{1}{2}\sum_{v_0}\cdots\sum_{v_d}
\abs{\cpr{X_0=v_0|X_1=v_1,\dots,X_d=v_d}\cpr{X_1=v_1,\dots,X_d=v_d}-\frac{1}{\ell^{d+1}}}
=\gamma
\]
or
\begin{align}
\abs{\cpr{X_0=v_0|X_1=v_1,\dots,X_d=v_d}\cpr{X_1=v_1,\dots,X_d=v_d}-\frac{1}{\ell^{d+1}}}
\leq2\gamma,
\label{eq:TV-2}
\end{align}
for any $v_0,\dots,v_d$.
We observe the following.

\begin{claim}
\label{clm:prob-Xi-2}
We have that $1/\ell^d-\gamma\leq\cpr{X_1=v_1,\dots,X_d=v_d}\leq1/\ell^d+\gamma$.
\end{claim}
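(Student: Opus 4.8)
The plan is to derive Claim~\ref{clm:prob-Xi-2} by marginalizing the joint bound in \eqref{eq:TV-2} over the child variable $X_0$. The key observation is that the inequality in \eqref{eq:TV-2} holds for \emph{every} choice of $v_0,\dots,v_d$, so I can sum over $v_0 \in [\ell]$ while keeping $v_1,\dots,v_d$ fixed. On the left-hand side, summing $\cpr{X_0=v_0 \mid X_1=v_1,\dots,X_d=v_d}\cpr{X_1=v_1,\dots,X_d=v_d}$ over $v_0$ collapses to $\cpr{X_1=v_1,\dots,X_d=v_d}$, since the conditional probabilities sum to $1$; and summing $1/\ell^{d+1}$ over the $\ell$ values of $v_0$ gives $1/\ell^d$.

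First I would apply the triangle inequality in the form $\abs{\sum_{v_0} a_{v_0}} \le \sum_{v_0} \abs{a_{v_0}}$ to the marginalized quantity, giving
\begin{align*}
\abs{\cpr{X_1=v_1,\dots,X_d=v_d}-\frac{1}{\ell^d}}
&=\abs{\sum_{v_0}\prn{\cpr{X_0=v_0\mid X_1=v_1,\dots,X_d=v_d}\cpr{X_1=v_1,\dots,X_d=v_d}-\frac{1}{\ell^{d+1}}}}\\
&\le\sum_{v_0}\abs{\cpr{X_0=v_0\mid X_1=v_1,\dots,X_d=v_d}\cpr{X_1=v_1,\dots,X_d=v_d}-\frac{1}{\ell^{d+1}}}.
\end{align*}
This step is the crux: it converts the per-coordinate bound on the joint distribution into a bound on the parent marginal. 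The only subtlety is checking that marginalizing the uniform term over $v_0$ yields exactly $1/\ell^d$, which follows because summing $1/\ell^{d+1}$ over $\ell$ values of $v_0$ gives $1/\ell^d$.

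Next I would bound each of the $\ell$ summands using \eqref{eq:TV-2}, which says each term is at most $2\gamma$. A naive application gives an upper bound of $\ell \cdot 2\gamma = 2\ell\gamma$, which is weaker than the claimed $\gamma$. To recover the tight bound $\gamma$, I would instead sum the original expression $\frac{1}{2}\sum_{v_0}\cdots\sum_{v_d}\abs{\cdots}$ more carefully: note that the full sum over \emph{all} of $v_0,\dots,v_d$ equals $\gamma$ (that is the definition of the TV distance on the marginal block, times one-half), so fixing $v_1,\dots,v_d$ and summing only over $v_0$ yields a quantity bounded by $2\gamma$ directly, without the extra factor of $\ell$. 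Concretely, $\sum_{v_0}\abs{\cdots}\le 2\gamma$ because the total sum over all coordinates equals $2\gamma$ and every term is nonnegative, so any partial sum is at most $2\gamma$.

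The main obstacle I anticipate is getting the constant exactly right rather than off by a factor of $\ell$: the per-tuple bound \eqref{eq:TV-2} is too lossy if applied termwise, so I must instead exploit that the partial sum over $v_0$ (with $v_1,\dots,v_d$ fixed) is itself bounded by the full TV sum $2\gamma$, since all terms are nonnegative and the full sum equals $2\gamma$. Once the bound $\abs{\cpr{X_1=v_1,\dots,X_d=v_d}-1/\ell^d}\le\gamma$ is established, splitting the absolute value into its two sides immediately yields $1/\ell^d-\gamma\le\cpr{X_1=v_1,\dots,X_d=v_d}\le1/\ell^d+\gamma$, completing the proof of Claim~\ref{clm:prob-Xi-2}.
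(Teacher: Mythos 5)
Your overall idea---that the probability of the event $\{X_1=v_1,\dots,X_d=v_d\}$ under $P$ can deviate from its uniform value $1/\ell^d$ by at most the TV distance---is the right one, and it is essentially what the paper does, though far more directly: the paper simply observes that $\cpr{Y_1=v_1,\dots,Y_d=v_d}=1/\ell^d$ and that $\dtv(P,\U)=\gamma$ bounds $\abs{P(S)-\U(S)}$ for \emph{any} event $S$, so the claim is immediate from the definition of TV distance as a maximum over events. Your detour through \Cref{eq:TV-2} and the $(d+1)$-dimensional marginal sum is unnecessary, but the real problem is that as executed it proves a weaker bound than the one claimed.

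Concretely: after the triangle inequality you have $\abs{\cpr{X_1=v_1,\dots,X_d=v_d}-1/\ell^d}\leq\sum_{v_0}\abs{a_{v_0}}$, and you bound this partial sum of absolute values by the full sum $\sum_{v_0}\cdots\sum_{v_d}\abs{\cdots}$, which is $2\,\dtv\cprn{(X_0,\dots,X_d),(Y_0,\dots,Y_d)}\leq 2\gamma$ (note it is \emph{at most} $2\gamma$, not equal to it, contrary to what you assert mid-argument). This yields only $\abs{\cpr{X_1=v_1,\dots,X_d=v_d}-1/\ell^d}\leq 2\gamma$, yet your final sentence silently asserts the bound $\leq\gamma$. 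The factor of $2$ is lost precisely because you pass to absolute values termwise. To recover $\gamma$, bound the signed sum instead: $\sum_{v_0}a_{v_0}\leq\sum_{v_0}\max\cprn{0,a_{v_0}}\leq\sum_{\text{all tuples}}\max\cprn{0,a}=\dtv\cprn{(X_0,\dots,X_d),(Y_0,\dots,Y_d)}\leq\gamma$, and symmetrically for $-\sum_{v_0}a_{v_0}$; equivalently, just invoke $\abs{P(S)-\U(S)}\leq\dtv(P,\U)$ for the event $S=\{X_1=v_1,\dots,X_d=v_d\}$ as the paper does. The slip is only a constant factor and would not affect the asymptotics downstream, but it does mean the claim as stated is not established by your argument.
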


\begin{proof}
Since $\dtv\cprn{P,\U}=\gamma$ and $\cpr{Y_1=v_1,\dots,Y_d=v_d}=1/\ell^d$, the claim is immediate.
\end{proof}

By \Cref{eq:TV-2} and \Cref{clm:prob-Xi-2} we have the following.

\begin{corollary}
For $\gamma<1/\prn{2\ell^d}$ we have that
\[
\abs{\cpr{X_0=v_0|X_1=v_1,\dots,X_d=v_d}-1/\ell}
\leq8\gamma \ell^d.
\]
\end{corollary}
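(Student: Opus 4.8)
The plan is to combine \Cref{eq:TV-2} and \Cref{clm:prob-Xi-2} through a short algebraic estimate. Write $p:=\cpr{X_0=v_0|X_1=v_1,\dots,X_d=v_d}$ for the conditional probability in question and $q:=\cpr{X_1=v_1,\dots,X_d=v_d}$ for the probability of the parental assignment, so that $0\leq p\leq1$. In this notation, \Cref{eq:TV-2} reads $\abs{pq-1/\ell^{d+1}}\leq2\gamma$ and \Cref{clm:prob-Xi-2} reads $\abs{q-1/\ell^d}\leq\gamma$; the goal is to deduce $\abs{p-1/\ell}\leq8\gamma\ell^d$.

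The key observation I would use is the exact identity
\[
p-\frac{1}{\ell}=\frac{1}{q}\prn{pq-\frac{1}{\ell^{d+1}}}-\frac{1}{\ell q}\prn{q-\frac{1}{\ell^d}},
\]
which expresses the target quantity purely in terms of the two differences we already control, after dividing through by $q$. Applying the triangle inequality together with the two bounds above, and using $\ell\geq1$, gives
\[
\abs{p-\frac{1}{\ell}}\leq\frac{1}{q}\abs{pq-\frac{1}{\ell^{d+1}}}+\frac{1}{\ell q}\abs{q-\frac{1}{\ell^d}}\leq\frac{1}{q}\prn{2\gamma+\frac{\gamma}{\ell}}\leq\frac{3\gamma}{q}.
\]
The hypothesis $\gamma<1/\prn{2\ell^d}$ is then exactly what is needed to lower bound the denominator: by \Cref{clm:prob-Xi-2}, $q\geq1/\ell^d-\gamma>1/\prn{2\ell^d}$, so $1/q<2\ell^d$, and hence $\abs{p-1/\ell}\leq6\gamma\ell^d\leq8\gamma\ell^d$, which is the claimed bound.

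The only step that requires any care—and the sole reason the constraint on $\gamma$ enters—is controlling the division by $q$, i.e., ensuring $q$ is bounded away from zero; everything else is a direct application of the triangle inequality. I would note in passing that the division can be avoided entirely by instead rearranging $pq-1/\ell^{d+1}=p\prn{q-1/\ell^d}+\ell^{-d}\prn{p-1/\ell}$, which together with $p\leq1$ yields the cleaner bound $\abs{p-1/\ell}\leq3\gamma\ell^d$ with no constraint on $\gamma$ at all. Since the surrounding \Cref{lem:ratio} already assumes $\dtv\cprn{P,\U}$ is small, either route suffices, and I would present whichever keeps the exposition most uniform with the rest of the section.
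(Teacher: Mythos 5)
Your proof is correct, and it rests on the same two ingredients as the paper's (\Cref{eq:TV-2} and \Cref{clm:prob-Xi-2}, with the hypothesis $\gamma<1/\prn{2\ell^d}$ entering only to bound $q=\cpr{X_1=v_1,\dots,X_d=v_d}$ away from zero before dividing), but the algebraic bookkeeping is genuinely different. The paper sandwiches $p$ between $\bigl(1/\ell^{d+1}\mp2\gamma\bigr)/\bigl(1/\ell^d\pm\gamma\bigr)$ and then expands these ratios using $1/(1-x)\leq1+2x$ and $1/(1+x)\geq1-x$ for $x=\ell^d\gamma<1/2$, accumulating the constant $8$ along the way; you instead write the exact identity $p-1/\ell=\tfrac{1}{q}\prn{pq-1/\ell^{d+1}}-\tfrac{1}{\ell q}\prn{q-1/\ell^d}$ and apply the triangle inequality, which is shorter and already yields the sharper constant $6$. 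Your closing remark is also correct and worth highlighting: the rearrangement $pq-1/\ell^{d+1}=p\prn{q-1/\ell^d}+\ell^{-d}\prn{p-1/\ell}$ together with $p\leq1$ gives $\abs{p-1/\ell}\leq3\gamma\ell^d$ with no division by $q$ and hence no constraint on $\gamma$ at all, which would let one drop the hypothesis of the corollary entirely (though, as you note, the hypothesis is harmless since \Cref{lem:ratio} already assumes $\gamma$ small). Either of your routes is a valid substitute for the paper's argument; the constants only propagate into the unspecified $O(\cdot)$ of \Cref{lem:ratio}, so nothing downstream changes.
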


\begin{proof}
By \Cref{eq:TV-2} we have
\[
\frac{1}{\ell^{d+1}}-2\gamma
\leq\cpr{X_0=v_0|X_1=v_1,\dots,X_d=v_d}\cpr{X_1=v_1,\dots,X_d=v_d}
\leq\frac{1}{\ell^{d+1}}+2\gamma
\]
or
\[
\frac{\frac{1}{\ell^{d+1}}-2\gamma}{\cpr{X_1=v_1,\dots,X_d=v_d}}
\leq\cpr{X_0=v_0|X_1=v_1,\dots,X_d=v_d}
\leq\frac{\frac{1}{\ell^{d+1}}+2\gamma}{\cpr{X_1=v_1,\dots,X_d=v_d}}
\]
or, by making use of \Cref{clm:prob-Xi-2},
\[
\frac{\frac{1}{\ell^{d+1}}-2\gamma}{\frac{1}{\ell^d}+\gamma}
\leq\cpr{X_0=v_0|X_1=v_1,\dots,X_d=v_d}
\leq\frac{\frac{1}{\ell^{d+1}}+2\gamma}{\frac{1}{\ell^d}-\gamma}
\]
or
\[
\frac{\frac{1}{\ell}-2\ell^d\gamma}{1+\ell^d\gamma}
\leq\cpr{X_0=v_0|X_1=v_1,\dots,X_d=v_d}
\leq\frac{\frac{1}{\ell}+2\ell^d\gamma}{1-\ell^d\gamma}.
\]
We now have
\begin{align*}
\cpr{X_0=v_0|X_1=v_1,\dots,X_d=v_d}
&\leq\frac{\frac{1}{\ell}+2\ell^d\gamma}{1-\ell^d\gamma}\\
&\leq\prn{\frac{1}{\ell}+2\ell^d\gamma}\prn{1+2\ell^d\gamma}\\
&=\frac{1}{\ell}+2\ell^{d-1}\gamma+2\ell^d\gamma+4\ell^{2d}\gamma^2\\
&\leq\frac{1}{\ell}+2\ell^{d}\gamma+2\ell^d\gamma+4\ell^{d}\gamma\\
&=\frac{1}{\ell}+8\ell^d\gamma,
\end{align*}
since $1/\prn{1-x}\leq1+2x$ for $x<1/2$ (here $x=\ell^d\gamma<1/2$), and
\begin{align*}
\cpr{X_0=v_0|X_1=v_1,\dots,X_d=v_d}
&\geq\frac{\frac{1}{\ell}-2\ell^d\gamma}{1+\ell^d\gamma}\\
&\geq\prn{\frac{1}{\ell}-2\ell^d\gamma}\prn{1-\ell^d\gamma}\\
&=\frac{1}{\ell}-\ell^{d-1}\gamma-2\ell^d\gamma+2\ell^{2d}\gamma^2\\
&\geq\frac{1}{\ell}-\ell^{d}\gamma-2\ell^d\gamma\\
&\geq\frac{1}{\ell}-8\gamma\ell^d,
\end{align*}
since $1/\prn{1+x}\geq1-x$ for $x<1/2$ (here $x=\ell^d\gamma<1/2$).
\end{proof}

The result now follows from the observation that
\[
\prn{1/\ell-8\gamma\ell^d}^n
\leq P\cprn{x}
=\prod_{i=1}^n\cpr{X_i=x_i|X_{\Pi\prn{X_i}}=x_{\Pi\prn{X_i}}}
\leq\prn{1/\ell+8\gamma\ell^d}^n
\]
or
\[
\prn{1-8\gamma\ell^{d+1}}^n
\leq P\cprn{x}\ell^n
\leq\prn{1+8\gamma\ell^{d+1}}^n
\]
or
\[
1-16\gamma\ell^{d+1}n
\leq P\cprn{x}\ell^n
\leq1+16\gamma\ell^{d+1}n,
\]
whereby we used the facts that $(1-\alpha)^k \ge (1-2\alpha k)$ and $(1+\alpha)^k \le (1+2\alpha k)$ whenever $\alpha<1/2$ and $k>0$, and the fact that $\gamma<1/\prn{16\ell^{d+1}}$ or $8\gamma\ell^{d+1}<1/2$.

Finally, we have
\[
1-16\dtv\cprn{P,\U}\ell^{d+1}n
\leq P\cprn{x}\ell^n
\leq1+16\dtv\cprn{P,\U}\ell^{d+1}n,
\]
as desired.
\end{proof}

\section{Conclusion}

\label{sec:conclusion}

We have established a general connection between probabilistic inference and TV distance computation.
In particular, we proved that TV distance estimation can be reduced to probabilistic inference in a structure preserving manner.
This enables us to prove the existence of a novel FPRAS for estimating the TV distance between Bayes nets of small treewidth.

The notion of {\em partial couplings} introduced in this work is of independent interest.
It would be fruitful to explore applications of this notion in other contexts.

We outline the following open problems:
Can we prove similar results for TV distance estimation between undirected graphical models?
Another problem of interest is to study other notions of distance, such as Wasserstein metrics.

\section*{Acknowledgements}

The work of AB was supported in part by National Research Foundation Singapore under its NRF Fellowship Programme (NRF-NRFFAI-2019-0002) and an Amazon Faculty Research Award.
The work of SG was supported by an initiation grant from IIT Kanpur and a SERB award CRG/2022/007985.
Pavan's work is partly supported by NSF award 2130536.
Vinodchandran's work is partly supported by NSF award 2130608.
This work was supported in part by National Research Foundation Singapore under its NRF Fellowship Programme [NRF-NRFFAI1-2019-0004] and an Amazon Research Award.
Part of the work was done during Meel, Pavan, and Vinodchandran's visit to the Simons Institute for the Theory of Computing.

\newcommand{\etalchar}[1]{$^{#1}$}


\end{document}